\newtheorem{theorem}{Theorem}
\newtheorem{lemma}{Lemma}
\newtheorem{property}{Property}
\newtheorem{proposition}{Proposition}
\newtheorem{dfn}{Definition}
\newtheorem{clm}{Claim}
\renewcommand{\paragraph}[1]{\vskip3mm\noindent\textbf{#1}}
\newcommand{\old}[1]{\iffalse #1 \fi}
\newcommand{\mat}[1]{\mathbold{#1}}
\renewcommand{\vec}[1]{\mathbold{#1}}
\newcommand{\1}{\mathbf{1}}
\DeclareMathOperator*{\argmin}{arg\,min}
\newcommand{\System}{\emph{CACD}\xspace}
\newcommand{\E}[1]{\mathbb{E}\hspace{-0.5mm}\left( #1 \right)}
\renewcommand{\P}[1]{\mathbf{P}\hspace{-0.5mm}\left( #1 \right)}
\newcommand{\svr}{node\xspace}
\newcommand{\svrs}{nodes\xspace}
\newcommand{\serv}{u\xspace}
\newcommand{\svi}[1]{\serv_{#1}}
\newcommand{\M}{\mathcal{R}}
\newcommand{\ledge}[1]{l(#1)}
\newcommand{\redge}[1]{r(#1)}
\newcommand{\bedge}[1]{b(#1)}
\newcommand{\wlg}{w.\,l.\,o.\,g.\/~}
\newcommand{\sv}[1]{s_{#1}}
\newcommand{\cw}[1]{\textit{cw}_{#1}}
\newcommand{\cs}[1]{\textit{cs}_{#1}}
\newcommand{\sigi}[1]{\sigma(#1)}
\newcommand{\w}[1]{\mathrm{walk}(#1)}
\newcommand{\Rmnum}[1]{\expandafter\@slowromancap\romannumeral #1@}
\def\abs#1{\lvert #1 \rvert}
\newcommand{\concat}[2]{#1 \oplus #2}
\begin{document}

\title{Towards Entropy-Proportional Robust Topologies} 

\title{Towards Communication-Aware Robust Topologies\thanks{Research
supported by the German-Israeli Foundation for Scientific Research and Development (GIF), grant number I-1245-407.6/2014.}}





\author{Chen Avin$^1$ \quad Alexandr Hercules$^1$ 
\quad Andreas Loukas$^2$ \quad Stefan Schmid$^3$\\
{\small $^1$ Ben Gurion University, Israel \quad  $^2$ EPFL, Switzerland \quad 
 $^3$ Aalborg University, Denmark}}

\date{}

\maketitle

\begin{abstract}
We currently witness the emergence of 
interesting new network topologies optimized towards
the traffic matrices they serve, 
such as demand-aware datacenter interconnects 
(e.g., ProjecToR) and demand-aware
overlay networks (e.g., SplayNets).
This paper introduces a formal framework and approach 
to reason about and design such topologies. We leverage a connection between
the communication frequency of two nodes and the path length between them in the network, which depends on the \emph{entropy} of the communication matrix. 
Our main contribution is a novel robust, yet sparse,
family of network topologies which guarantee an expected path length 
that is proportional to the entropy of the communication patterns.
\end{abstract}



\vspace{-3mm}
\section{Introduction}\label{sec:intro}

Traditionally, the topologies and interconnects of computer networks
were optimized toward static worst-case criteria, such as maximal
diameter, maximal degree, or bisection bandwidth. For example, 
many modern datacenter interconnects are based on Clos topologies~\cite{comm-dc}
which provide a constant diameter and a high bisection bandwidth,
and overlay networks are often hypercubic, providing a logarithmic
degree and route-length in the worst case.

While such topologies are efficient in case of worst-case traffic
patterns, researchers have recently started developing novel
network topologies which are optimized towards
the traffic matrices which they actually serve, henceforth
referred to as \emph{demand-aware topologies}. 
For example, ProjecToR (ACM SIGCOMM 2016~\cite{projector}) 
describes a novel datacenter interconnect based on
laser-photodetector edges, which can be established according
to the served traffic pattern, which often are far from random
but exhibit locality. Another example are 
SplayNet overlays (IEEE/ACM Trans.~Netw.~2016~\cite{splaynet}).

\subsection{Our Contribution}

We in this paper present a novel approach
to design demand-aware topologies, which
come with provable performance and robustness
guarantees. We consider 
a natural new metric to measure the quality
of a demand-optimized network topology,
namely whether the provided path lengths 
are proportional to the \emph{entropy} in the traffic
matrix: frequently communicating nodes should
be located closer to each other. 
Entropy is a well-known
metric in information and coding theory,
and indeed, the topology designs
presented in this paper are based on coding theory.

Our main contribution is a novel robust and sparse
family of network topologies which guarantee an entropy-proportional
path length distribution.
Our approach builds upon the  
continuous-discrete design introduced 
by Naor and Wieder~\cite{naor2007novel}. 
The two key benefits of the continuous-discrete design
are its flexibility and simplicity:
It allows 
to formally reason about
topologies as well as routing schemes in the 
continuous space, and a simple discretization 
results in network topologies which preserve 
the derived guarantees. 

At the heart of our approach lies a novel coding-based approach, 
which may be of independent interest. 
 
We also believe that our work offers interesting new insights
 into the classic continuous-discrete approach. 
 For instance, we observe that greedy routing can combine both 
 \emph{forward} and \emph{backward} routing, which introduces
 additional flexibilities.

\subsection{Related Work}

We are not aware of any work on robust
network topologies providing 
entropy-proportional path length guarantees.
Moreover, to the best of our knowledge, there is no 
work on continuous-discrete network
designs for non-uniform distribution probabilities.

In general, however, the design of efficient communication networks has
been studied extensively for many years already.
While in the 1990s, efficient network topologies were studied
intensively in the context of VLSI designs~\cite{Leighton:1991:IPA:119339},
in the 2000s, researchers were especially interested in designing
peer-to-peer overlays~\cite{stoica2001chord,aspnes2007skip,malkhi2002viceroy}, and more recently,
research has, in particular, focused on designing efficient
datacenter interconnects~\cite{dcell,bcube}.

The design of robust networks especially has been of prime interest
to researchers for many decades. In the field of network design,
survivable networks are explored that remain functional when links
are severed or nodes fail, that is, network services can be
restored in the event of catastrophic failures~\cite{rob-net-des-1,rob-net-des-2}.
Recently, researchers have proposed an interesting generic and declarative approach
to network design~\cite{condor}.

The motivation for our work comes from recent
advances in more flexible network designs,
also leveraging
the often non-uniform traffic demands~\cite{traffic-1},
most notably ProjecToR~\cite{projector}, but also projects related in spirit such as Helios~\cite{ref16-helios}, REACToR~\cite{ref25-reactor}, Flyways~\cite{ref23-flyways}, Mirror~\cite{zhou2012mirror}, Firefly~\cite{ref22-firefly}, etc.~have explored the use of several
underlying technologies to build such networks. 

The works closest to ours are the Continuous-Discrete 
approach by Naor and Wieder~\cite{naor2007novel}, the  
SplayNet~\cite{splaynet} paper and the work by
Avin et al.~\cite{disc17netdesign}. 
We tailor the first~\cite{naor2007novel} to demand-optimized networks,
and provide new insights e.g., 
 how greedy routing can be used to combine both 
 \emph{forward} and \emph{backward} routing,  
 introducing
 additional flexibilities. 
SplayNet~\cite{splaynet} and Avin et al.~\cite{disc17netdesign}
focus on binary search trees
resp.~on sparse constant-degree network designs,
however, they do not provide any robustness or path diversity
guarantees. Moreover, ~\cite{disc17netdesign} does
not provide local routing.

\subsection{Organization}

The remainder of this paper is organized as follows.
In Section~\ref{sec:model} we describe our model.
Section~\ref{sec:prelim} introduces  
background and preliminaries.
Section~\ref{sec:system} then presents and formally
evaluates our
coding approach to topology designs.
We then present routing and network properties in Sections \ref{sec:algorithms} and \ref{sec:properties}.
After reporting on simulation results
in Section~\ref{sec:sims}
we conclude our contribution in Section~\ref{sec:conclusion}.

\section{The Problem}\label{sec:model}




Our objective is to design a communication network $G = (V,E)$, where $V$ is the set of nodes and $E$ is the set of edges, with $n = \abs{V}$ and $m = \abs{E}$. 

%
We assume that the network needs to serve route requests that are 
drawn independently from an arbitrary, but fixed distribution. 
We represent this request distribution $\mathcal{R}$ using the $n\times n$ \emph{demand} 
matrix $\mat{R}$, where entry $R_{ij}$ denotes 
the probability of a request from source $u_i$ to destination $u_j$.
The \emph{source} and \emph{destination} marginal distribution vectors are defined, respectively, as 
\begin{align}
	\vec{p_s} = \mat{R} \1 \quad \text{and} \quad \vec{p_d}^\top = \1^\top \mat{R}, 
\end{align}
where $\1$ is the all-ones vector and, being a probability matrix, $\1^\top \mat{R} \1 = 1$.
%
%


While it is desirable that $G$ has a low diameter, 
we in this paper
additionally require that this diameter is also found, by a simple 
 \emph{greedy routing} scheme. Supporting greedy routing
 is attractive, as 
messages can always be forwarded without global knowledge, but only 
based on the header and neighbor information. 
This allows for a more scalable forwarding and also supports
fast implementations, also in hardware. 

Given a network $G$ and a routing algorithm $\mathcal{A}$, 
denote by $Route_{G, \mathcal{A}}(i,j)$ the route length 
from node $u_i$ to node $u_j$. 
Traditionally, the route lengths are optimized 
uniformly across all possible pairs. However,
in this paper, we seek to optimize the 
\emph{expected path length} 
\begin{dfn}
    Given a request distribution $\M$, a network $G$ and a routing algorithm $\mathcal{A}$, 
    the expected path length of the system is defined as:
\begin{align}\label{eq:EPL}
       \text{EPL}(\mathcal{R}, G, \mathcal{A}) &= \mathbb{E}_{\mathcal{R}}[\textit{Route}_{G, \mathcal{A}}] \nonumber \\
    &= \sum\limits_{u_i,u_j \in V} R_{ij} \cdot Route_{G, \mathcal{A}}(i,j)
\end{align}
(or $\text{EPL}$ for short), with respect to distribution $\mathcal{R}$. 
\end{dfn}

%


We seek to design network topologies which
are not only good for routing, but also feature
desirable properties along other dimensions. 
These dimensions are often not binary but offer   
a spectrum, and sometimes also stand in conflict with
other properties, introducing tradeoffs. 
We consider the following four basic dimensions:
\begin{enumerate}

\item \textbf{Routing Efficiency:} The network should feature a small expected path length. Therefore, the diameter, and more importantly, the routing distance (which may be larger) between frequently communicating nodes should be shorter than the distance between nodes that communicate less frequently. 

\item \textbf{Sparsity:} 
We are interested in sparse networks, where the number of edges is 
\emph{small}, ideally even linearly proportional to the number of nodes \svrs. 
Indeed, as we will see, our proposed graphs have
at most $|E| \leq 4n$ edges. 

\item \textbf{Fairness:} The degree of a node in the network as well as the node's 
level of participation in the routing, should be proportional to the node's 
activity level in the network.
Putting it differently, a network is \emph{unfair} if nodes that are not very 
active (in $\mathcal{R}$) have a high degree or need to forward a disproportional
amount of traffic. 
To be more formal, let $d_i$ denote the degree of node $u_i$ in $G_\mathcal{R}$ and in addition, define the $n\times 1$ \emph{activity vector} $\vec{a} = \vec{p_s} + \vec{p_d}$, where element $a_i$ is the activity level of $u_i$ drawn from the request distribution $\mathcal{R}$. We require that the degree of $u_i$ is linearly proportional to $a_i$,
that is, a small positive constant $c$ exists such that $d_i \leq c \, a_i \, n$ for all nodes~$u_i$.

\item \textbf{Robustness:} The network and  the routing algorithm should be robust to (random and adversarial)
edge failures. 
\end{enumerate}


Let's consider an example demand matrix along with different possible network topologies
(see Figure \ref{fig:net_examplest}, 
the size of nodes is proportional to their activity level).
Assume a network with 9 \svrs $V=\svi{0}, \dotsc, \svi{8}$ and 
an activity level distribution $\vec{p} = \{p_1, p_2, \dotsc, p_9\}$
where $p_i = c^{-1} \cdot i^{-1}$ and $\sum p_i=1$. Additionally assume that nodes 
$i$ and $j$ communicate with probability $p(i,j) = p_i \cdot p_j$. 
Let's consider a \emph{complete graph}. A complete graph topology is very \emph{robust} to failures and \emph{efficient} in terms of routing. 
But the degree of every node is high, independently of
the activity, and the number of edges is $n^2$. Thus,
the network is not \emph{sparse} and not \emph{fair}. 


Our second network is 
a \emph{star} network. The star topology is 
 \emph{sparse} with only $n-1$ edges. The \emph{efficiency} is almost the same 
 as for complete network, that largest route has length 2. 
Also the edge robustness is good since edge failures will lead to a disconnection 
of the same number of \svrs; but a single failure of the central \svr 
will bring down the entire network, so this topology is not \emph{robust}.
The \emph{fairness} of the network is also problematic: 
the central node has degree $n-1$
and is involved in all routing, although its activity level may be much smaller than that.

\begin{figure}
	\begin{tabular}{cc}
		\includegraphics[width=.45\columnwidth]{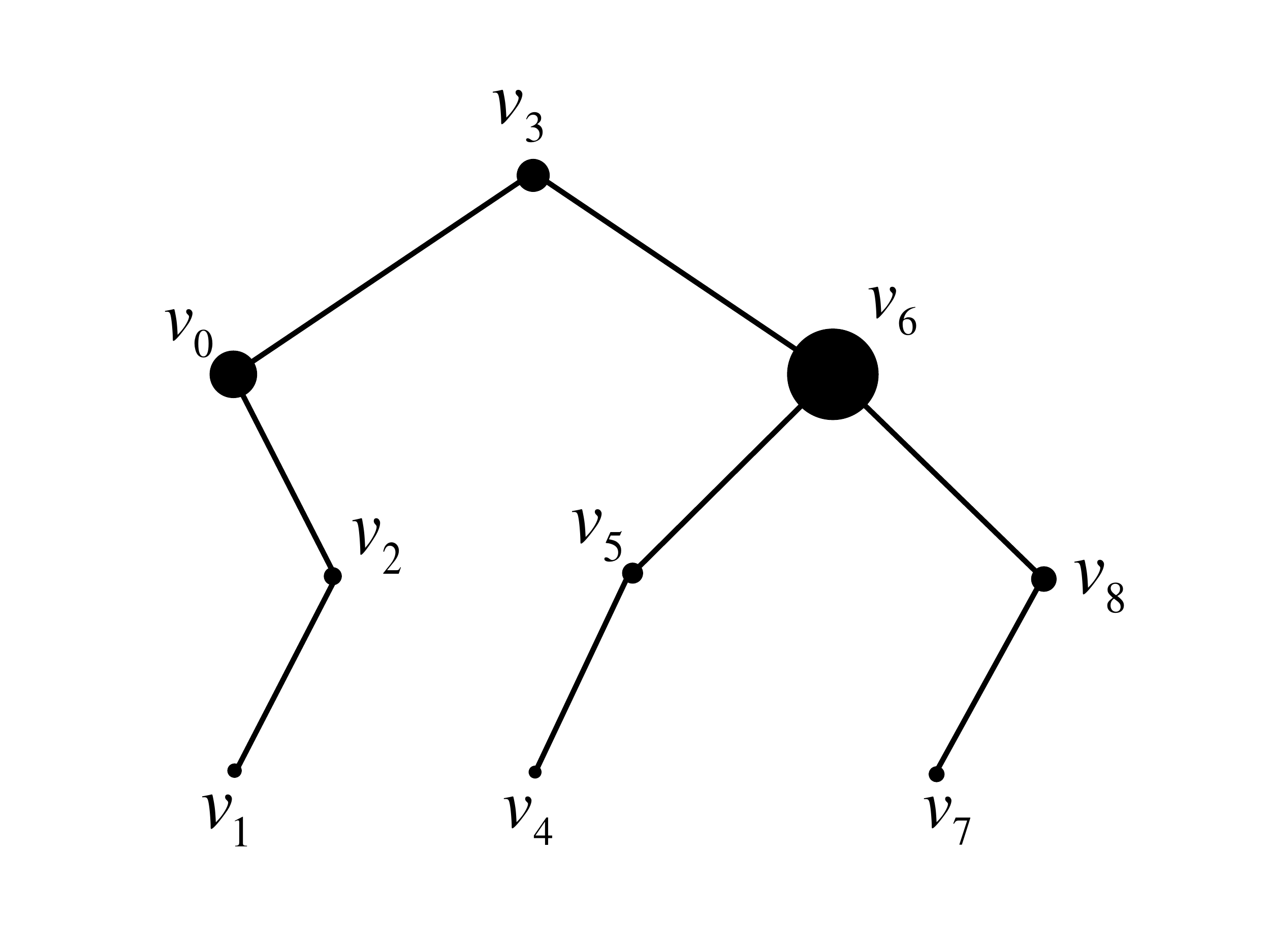} &
		\includegraphics[width=.45\columnwidth]{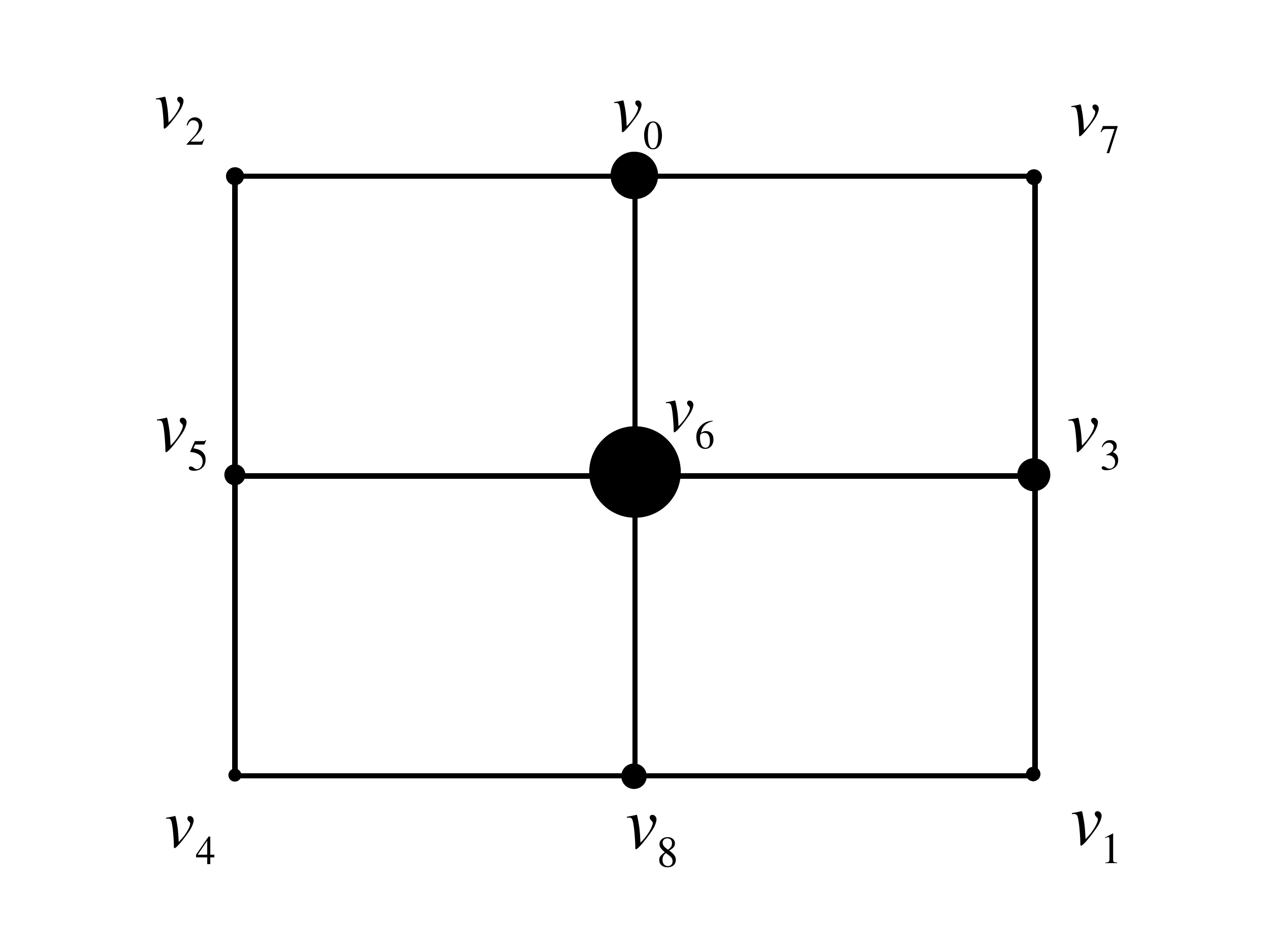} \\
		(a) binary search tree net & (b) a grid network \\
		\includegraphics[width=.45\columnwidth]{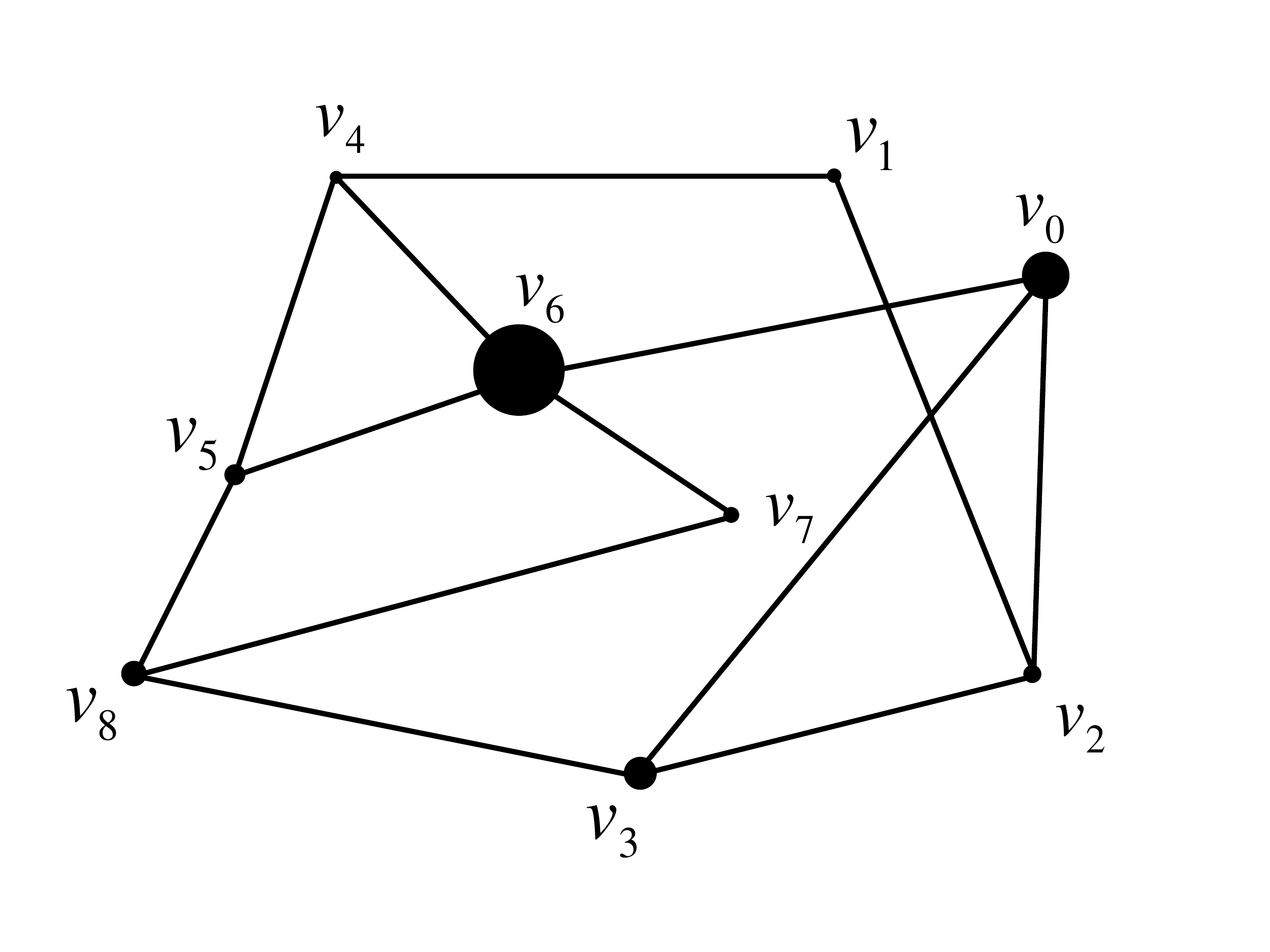} &
		\includegraphics[width=.45\columnwidth]{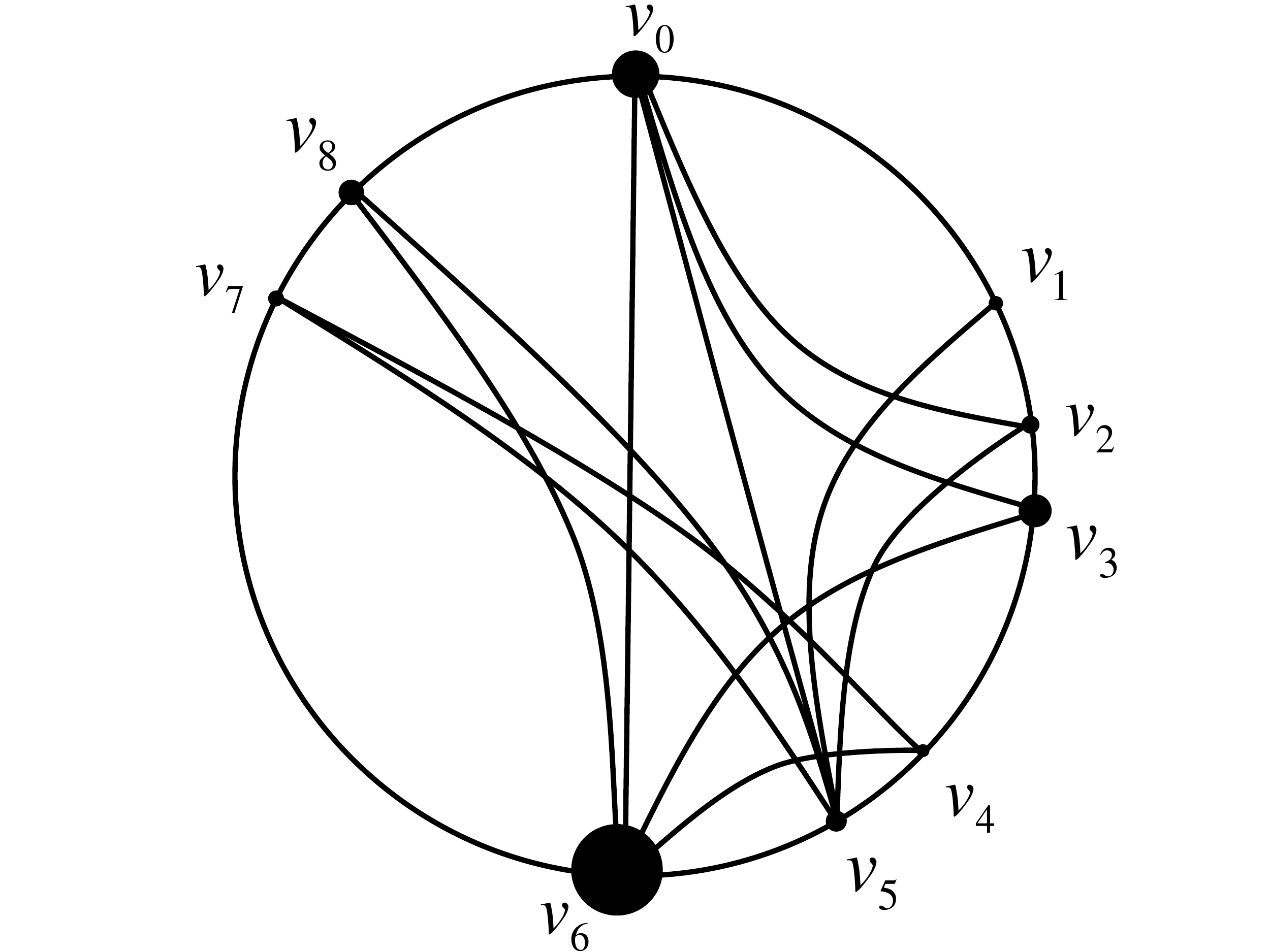} \\
		 (c) random network & (d) \System network \\
		\end{tabular}
\caption{Example network designs.}
\label{fig:net_examplest}
\end{figure}

To solve the fairness problem, but keep the graph sparse, researchers recently 
proposed a communication aware (overlay) network 
that is based on a binary search tree~\cite{splaynet}. Since the network is a 
binary search tree, all nodes have low degree, 
the graph is \emph{sparse} and achieves \emph{fairness}
(with regard to node degree) as well as greedy routing. 
Beside having self-adujsting properties (which we do not consider here), 
the authors showed that the expected route length in the network is proportional 
to the \emph{maximum} of the entropy of the source and destination distributions. 
This is achieved by placing highly active nodes closer to the root. Moreover the authors showed that their 
bound on the route length is optimal in the case that the request 
distribution follows a product distribution (and the network is a binary tree). 
However, the network is a tree, and is not robust.


A topology that overcomes the robustness problem of trees, but keeps the graph sparse, 
is the \emph{grid} (see Figure \ref{fig:net_examplest} (d)).
This topology is \emph{fair}, because each node keeps only a small number, up to 4, 
transport connections to other nodes. Moreover, this topology is \emph{robust}:
it allows for multiple routing paths between any two nodes, therefore in case of 
an edge or node failure, the graph stays connected. The number of edges 
is linearly proportional to the number of \svrs, so it's \emph{sparse}. Additionally the grid supports greedy routing, but its major problem is that the routing paths (and the shortest paths) are long, thus its not \emph{efficient} in terms of expected path length.

We continue our discussion by considering a network that is \emph{sparse}, \emph{robust}, \emph{fair} and has short paths between nodes.
An example are random graphs. It is well known that random sparse graphs will have all the above properties,
for example by using the famous Erd\H{o}s-R\'enyi random graph model~\cite{erdos1959on-random} (see Figure \ref{fig:net_examplest} (e)).
A known problem with random graphs is that they do not support greedy routing \cite{kleinberg2000small}, and these graphs are hence 
not interesting in our case.


We end the discussion by reaching our proposed topology designs, \System,
which is shown in Figure \ref{fig:net_examplest} (f). We will show, that our solution provides a
good balance between all the above properties. 
It is a \emph{sparse}, \emph{fair} and \emph{robust} network. And as required,
it support greedy routing while being communication aware by design. 
That is, the expected route length
is a function of the communication matrix $\M$. It particular, our main result proves that the expected path length is the \emph{minimum}  
of the entropy of the source and destination distributions.

\section{Preliminaries}\label{sec:prelim}

In order to design communication-aware network topologies,
we pursue an information-theoretical approach.
This section revisits some basic information-theoretic concepts which are important
to understand the remainder of the paper. Moreover, it provides
the necessary background on topology designs, and introduces
preliminaries.

%
%

\paragraph{Entropy and Shannon-Fano-Elias Coding.}
Recall that entropy is a measure of unpredictability of information content~\cite{shannon2001mathematical}. For a discrete random variable $X$ with possible values
$\{x_1, \dots , x_n\}$, the (binary) entropy $H(X)$ of $X$ is defined as
$H(X) = \sum_{i=1}^n p_i\log_2\frac{1}{p_i}$
where $p_i$ is the probability that $X$ takes the value $x_i$. Note that, $0 \cdot \log_2\frac{1}{0}=0$ and we usually assume that
$p_i > 0$ $\forall i$. Let $\vec{p}$ denote $X$'s probability distribution,
 then we may write $H(\vec{p})$ instead of $H(X)$.

Shannon-Fano-Elias~\cite{thomas2006elements} is a well-known \emph{prefix code} for lossless data compression. 
The Shannon-Fano-Elias algorithm derives variable-length codewords
based on the  probability of each possible symbol, depending 
on its estimated frequency of occurrence. 
As in other entropy encoding methods, more common symbols are generally represented using fewer bits, to reduce the expected code length.
We choose this coding method for our topology design, due to its simplicity and since its expected code length 
is almost optimal.


Consider a discrete random variable of \emph{symbols} $X$ with possible values $\{x_1, \dots , x_n\}$ and the corresponding symbol probability $p_i$. 
The encoding is based on cumulative distribution function (CDF)
$F_i = F(x_i) = \sum\limits_{j \leq i} p_j$.
%
The coding scheme encodes symbols using function 
\begin{align}\label{eq:cwfunc}
	\bar{F}_i = \sum_{j < i} p_j + \frac{p_i}{2} = F_{i-1} + \frac{p_i}{2}, 
\end{align}
%
%
Denote by $(x)_{01}$ the binary representation of $x$. The codeword for symbol $x_i$ consists of the first $\ell_i$ bits of the fractional part of $\bar{F}_i$, 
$	cw_i = \lfloor (\bar{F}_i)_{01} \rfloor_{\ell_i},
$%
where the \emph{code length} $\ell_i$ is defined as 
\begin{align}\label{eq:cwlength}
\ell_i = \lceil \log ({p_i}^{-1}) \rceil + 1. 
\end{align}

%
%

The above construction guarantees $(i)$ that the codewords $cw_i$ are are prefix-free 
and $(ii)$ that the expected code length 
\begin{align}\label{eq:ecl}
L_{SFE}(X) = \sum\limits_{i = 1}^n p_i \cdot l_i &= \sum\limits_{i = 1}^n p_i (\lceil \log{({p_i}^{-1})} \rceil + 1) 
\end{align}
is close to the entropy $H(X)$ of random variable $X$
\begin{align}\label{eq:eclbound}
	H(X) + 1 \leq L_{SFE}(X) < H(X) + 2,
\end{align}

\paragraph{Continuous-Discrete Approach.}
Our work builds upon the continuous-discrete topology design approach
introduced by Naor and Wieder~\cite{naor2007novel}. 
It is based on a discretization of 
a continuous space into segments, corresponding to nodes. 
There are two variants, and here we follow  
the first variant which is a Distributed Hash Table (DHT),
called \emph{Distance Halving}.

The construction starts with a \emph{continuous graph} $G_c$ 
defined over a 1-dimensional cyclic space $I = [0, 1)$. 
%
As shown in Figure \ref{fig:Edges}, for every point $x \in I$, the \emph{left}, \emph{right}, and \emph{backward} edges of $x$ are the points 
\begin{align*}
	\ledge{x} = \frac{x}{2}, \quad \redge{x} = \frac{x+1}{2}, \quad \bedge{x} = 2y \, \text{mod}\, 1.
\end{align*} 
It is easy to check that $\ledge{x}$ and $\redge{x}$ always fall in $[0,1)$. In addition, when $x$ is written in binary form, $\ledge{x}$ effectively inserts a 0 at the left (most significant bit), whereas $\redge{x}$ shifts a 1 into the left. The backward edge removes the most significant bit.

The \emph{discrete network} $G_\vec{x}$ is then a discretization of $G_c$ 
according to a set of $n$ points $\vec{x}$ in $I$, with $x_i < x_{i+1}$ for all $i$.
The points of $\vec{x}$ divide $I$ into $n$ segments, one for each node: 
\begin{align*}
	s_i = [x_i, x_{i+1}) \, \forall i<n \quad \text{and} \quad s_n = [x_{n-1}, 1) \cup  [0, x_1)
\end{align*}
%
%
Nodes $x_i, x_j$ are connected by an edge in the discrete graph $G$ if there exists an edge $(y, z)$ in the continuous graph, such that $y \in s_i$ and $z \in s_j$. 
In addition, we add edges $(x_i, x_{i+1})$ and $(x_{n-1}, x_0)$ so that $G_{\vec{x}}$ 
contains a ring.


%

\begin{figure}[t!]
    \centering
    \includegraphics[width=250pt]{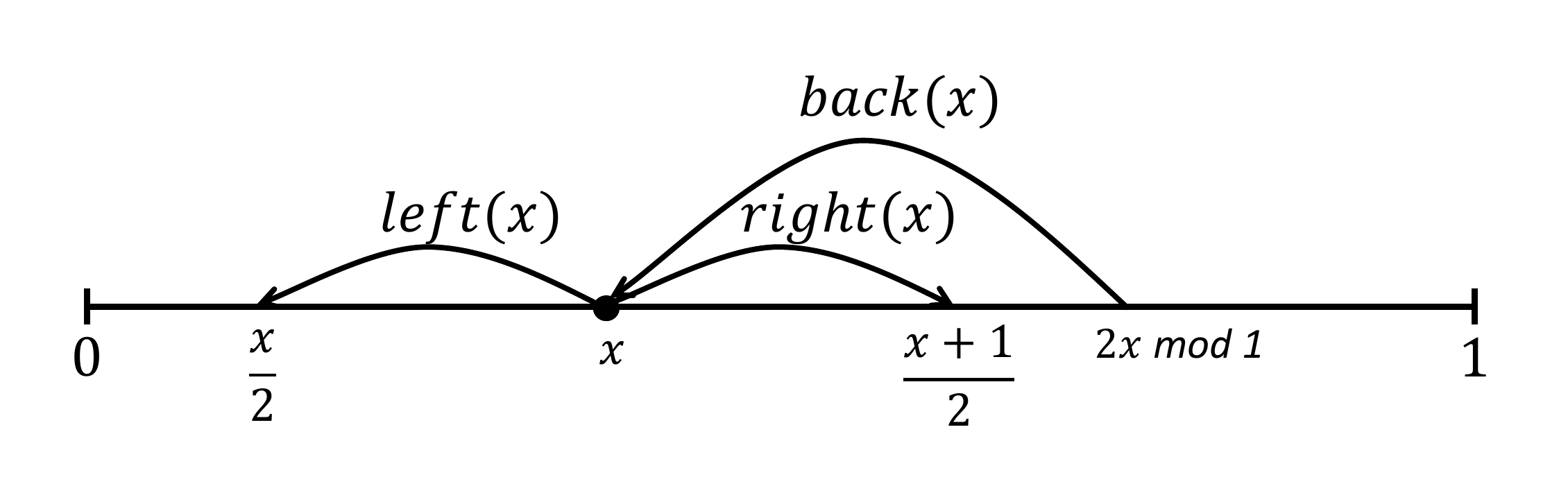}
    \caption{The edges of a point in the continuous graph $G_c$.}
    \label{fig:Edges}
\end{figure}


An important parameter of the decomposition of $I$ is the ratio between the size of the largest and the smallest cell (segment), which is called the \emph{smoothness} and denoted by $\rho(\vec{x}) = \max_{i,j}\frac{\left | s_i \right |}{\left | s_j \right |}$. The total number of edges in $G_{\vec{x}}$ without the ring edges is at most $3n-1$, the maximum out-degree without the ring edges is at most $\rho(\vec{x}) + 4$, and the maximum in-degree without the ring edges is at most $\lceil2\rho(\vec{x})\rceil + 1$.

In the original construction by Naor and Wieder, the $x_i$s were assumed to be uniform random variables. The goal 
is to offer a constant degree network with equal loads, and ensure smoothness (i.e., minimal $\rho$). 
The authors also show that the Distance Halving construction resembles the well known De Bruijn graphs~\cite{bruijn1946combinatorial}:
if $x_i = \frac{i}{n}$ and $n=2^r$ then the discrete Distance Halving graph $G_{\vec{x}}$ without the ring edges is isomorphic to the $r$-dimensional De Bruijn graph.
Based on this the authors propose two greedy lookup algorithms with a path length of logarithmic order (i.e., $r$). We use similar ideas in our routing.

\section{\System Topology Design}\label{sec:system}

We propose a coding-based topology design
which reflects communication patterns. 
We will show that our solution provides 
an efficient routing (the expected path length is the \emph{minimum}  
of the source and destination distribution entropy),
but also meets
our requirements in terms of sparsety,
fairness and robustness. 


The basic idea behind our communication-aware
continuous-discrete (\System) topology design is simple. 
Similar to the classic continuous-discrete approach, 
we start by designing a continuous network $G_c$ in the 1-dimensional cyclic 
space $I = [0, 1)$. This continuous network is subsequently discretized 
so as to obtain $G$. In contrast to previous work however, in our {discrete} 
graph construction we do not place nodes (or, more precisely, points $\vec{x}$) in a 
uniform or random manner. 
Rather, the points $\vec{x}$ are placed 
on $I$ according to the CDF of the the distribution $\vec{p}$:
\begin{align}\label{eq:minp}
\vec{p} = \argmin_{\vec{p_s},\vec{p_d}}(H(\vec{p_s}), H(\vec{p_d}))
\end{align}
Let $U_I$ be a uniform random variable on $I$. The $i$-th point $x_i$ is then given by 
\begin{align}
	x_i = U_I + F_{i-1}  \quad \text{with} \quad i = 1, 2, \ldots, n.
\end{align}
Though adding $U_I$ is not crucial to our construction, 
the resulting randomness aids us in overcoming an adversary 
(see Section~\ref{sec:properties}). Node $u_i$ is therefore responsible for segment 
\begin{align}
	s_i = [(U_I + F_{i-1})~\text{mod}~1, \quad (U_I + F_{i})~\text{mod}~1),
\end{align}
of length $p_i$. For simplicity of presentation, in the
 following we omit the modulo operator, and all points $x \in I$ 
 are taken modulo 1. The rest of the discretization is like
 in the continuous-discrete approach. 
In the discrete graph $G_{\vec{x}}$, each \svr $\svi{i}$ is 
associated with the segment $s_i$, and we may refer to this segment 
as $\sv{i}$ as well. If a point $y$ is in $s_i$, we say that 
$\svi{i}\ covers\ y$. A pair of vertices $\svi{i}$ and  $\svi{j}$ has an edge 
$(\svi{i}, \svi{j})$ in $G_\vec{x}$ if there exists an edge $(y, z)$
 in the \emph{continuous} graph, such that $y \in s_i$ and $z \in s_j$. 
 The edges $(\svi{i}, \svi{i+1})$ and $(\svi{n - 1}, \svi{0})$ are added such that $G_\vec{x}$ contains a ring. 

An important feature of our design (which we use later in our routing algorithm) is the 
relationship between 
the segment  $\sv{i}$ of a node to its codeword $\cw{i}$ .
%
Let the \emph{ID} of \svr $u_i$ be $\cw{i}$:
\begin{align}\label{eq:cwvi}
\cw{i} = \lfloor (\bar{F}_i + U_I)_{01} \rfloor_{\ell_i},
\end{align}
Recall that $\ell_i$ is the length of $\cw{i}$, see Eq.\eqref{eq:cwlength}.
Define $\cs{i}$ to be the \emph{code segment} of $v_i$:
\begin{align}
\cs{i} = [\cw{i}, \cw{i}+2^{-\ell_i}) 
\end{align}

We will use the following property of $\cs{i}$:
\begin{property}\label{prp:prefix}
$\cs{i}$ contains all the points $z \in I$ s.t.~$\cw{i}$ is a prefix of $z$. 
\end{property}

\begin{clm} \label{clm:cwcs} 
$\cw{i} \in  \cs{i} \subseteq \sv{i}$
\end{clm}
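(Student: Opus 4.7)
The plan is to prove the two containments separately. For $\cw{i} \in \cs{i}$, the inclusion is immediate from the definition $\cs{i} = [\cw{i}, \cw{i} + 2^{-\ell_i})$, a half-open interval containing its left endpoint. So the substance of the claim lies in showing $\cs{i} \subseteq \sv{i}$, which I would attack by bounding the endpoints of $\cs{i}$ in terms of those of $\sv{i}$.

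First I would unpack the codeword definition: since $\cw{i}$ is the $\ell_i$-bit truncation of the binary fractional expansion of $(\bar{F}_i + U_I) \bmod 1$ from Eq.~\eqref{eq:cwvi}, the standard bound on binary truncation yields
\[
\cw{i} \le (\bar{F}_i + U_I) \bmod 1 < \cw{i} + 2^{-\ell_i}.
\]
Using $\bar{F}_i = F_{i-1} + p_i/2$ from Eq.~\eqref{eq:cwfunc}, this quantity is exactly the midpoint of the node segment $\sv{i}$, which has length $p_i$. The key quantitative ingredient is that the code segment is at most half as long as the node segment: from Eq.~\eqref{eq:cwlength}, $\ell_i = \lceil \log p_i^{-1} \rceil + 1 \ge \log p_i^{-1} + 1$, so $2^{-\ell_i} \le p_i/2$. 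Combining, the left endpoint satisfies
\[
\cw{i} > (\bar{F}_i + U_I) - 2^{-\ell_i} \ge U_I + F_{i-1} + p_i/2 - p_i/2 = U_I + F_{i-1},
\]
and the right endpoint satisfies
\[
\cw{i} + 2^{-\ell_i} \le (\bar{F}_i + U_I) + 2^{-\ell_i} \le U_I + F_{i-1} + p_i/2 + p_i/2 = U_I + F_i,
\]
so $\cs{i}$ is sandwiched inside $\sv{i}$.

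The only real obstacle is bookkeeping around the modulo-1 wrap-around on the cyclic space $I = [0,1)$: both $\sv{i}$ and $\cs{i}$ may straddle the point $0$. However, since $|\cs{i}| = 2^{-\ell_i} \le p_i/2$ and $|\sv{i}| = p_i$ are both strictly less than $1$, both remain proper arcs on the circle. I would handle the wrap-around uniformly by lifting everything to $\mathbb{R}$, performing the above interval arithmetic there, and then reducing modulo $1$, so no genuine case analysis is needed.
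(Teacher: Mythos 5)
Your proof is correct and follows essentially the same route the paper has in mind: the paper omits the argument, appealing to ``a basic property of Shannon-Fano-Elias coding,'' and what you wrote out is exactly that standard argument --- the $\ell_i$-bit truncation of the midpoint $\bar{F}_i + U_I$ differs from it by less than $2^{-\ell_i}$, while $\ell_i = \lceil \log p_i^{-1}\rceil + 1$ forces $2^{-\ell_i} \le p_i/2$, sandwiching $\cs{i}$ inside $\sv{i}$ --- adapted to the shifted construction with $U_I$. Your lift-to-$\mathbb{R}$ treatment of the wrap-around is also sound (indeed $\cw{i} + 2^{-\ell_i} \le 1$ always holds, so only the node segment can straddle $0$), so no further detail is needed.
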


We omit the proof here and note that by definition $\cw{i} \in  \cs{i}$, and
 $\cs{i} \subseteq \sv{i}$ is a basic property of Shannon-Fano-Elias coding, 
 so it is a prefix code.
Figure \ref{fig:Cont_Disc} illustrates 
the \emph{segment} and \emph{code segment} of $v_i$, for simplicity  $U_I=0$.

Let us clarify our approach with an example. Consider the activity distribution given in Table~\ref{tab:examplenetwork} and, for simplicity of presentation, set $U_I = 0$. 
Carrying out the codeword construction as shown in the table, we get the node placement of Figure~\ref{fig:example_cont}. To obtain the discrete graph one then checks how the left and right images of each segment intersect other segments. For a segment $s_i = [x_i, x_{i+1})$ its left  and right segments are,
$\ledge(s_i) = [\ledge(x_i), \ledge(x_{i+1}))$ and $\redge(s_i) = [\redge(x_i), \redge(x_{i+1}))$, respectively.
For instance, the left edges of segment $s_4$ partially cover $s_2$ and $s_3$. Therefore, in $G$ the neighbors of node $u_4$ are $u_2$ and $u_3$. In the same way, the right edges of $u_4$ partially cover $s_5$ and $s_6$, which makes the respective nodes neighbors of $u_4$. Repeating the same process for all nodes, we obtain the discrete graph $G_\vec{x}$ which is shown in Figure~\ref{fig:example_disc}.  Notice that $u_4$ has also an edge to $u_1$: this edge is a result of a left edge of $u_1$, or equally, a backward edge for a point in $s_4$.

\setlength{\extrarowheight}{2.5pt}
\begin{table*}[t]
\begin{center}
     \begin{tabular}{l l l l l l l l } 
         $i$ & $p_i$ & $F_i$ & $\bar{F}_i$ & $(\bar{F}_i)_{01}$& $\ell_i$ & $cw_i$ & $x_i$ \\ [0.5ex] 
         \hline  
         1 & 0.1 & 0.1 & 0.05 & 0.000011... & 5 & 00001 & 0\\ 
         2 & 0.15 & 0.25 & 0.175 & 0.001011... & 4 & 0010 & 0.1\\ 
         3 & 0.2 & 0.45 & 0.35 & 0.010110... & 4 & 0101 & 0.25\\ 
         4 & 0.25 & 0.7 & 0.575 & 0.100100... & 3 & 100 & 0.45\\ 
         5 & 0.1 & 0.8 & 0.75 & 0.110000... & 5 & 11000 & 0.7\\
         6 & 0.2 & 1.0 & 0.9 & 0.111001... & 4 & 1110 & 0.8 \\  [0.5ex] 
         \hline
    \end{tabular}
\end{center}
\caption{An example network with 6 nodes. For each node $i$, the table presents the node activity level $p_i$, the CDF $F_i$, the altered CDF $\bar{F}_i$ and its full binary representation $(\bar{F}_i)_{01}$, the code length $\ell_i$, the Shannon-Fano-Elias binary code $cw_i$, and the position of $x_i$ in $I$.}
\label{tab:examplenetwork}
\end{table*}

\begin{figure*}[t!]
 \centering
	\begin{subfigure}[b]{0.25\linewidth}
	\centering
	\includegraphics[width=\linewidth]{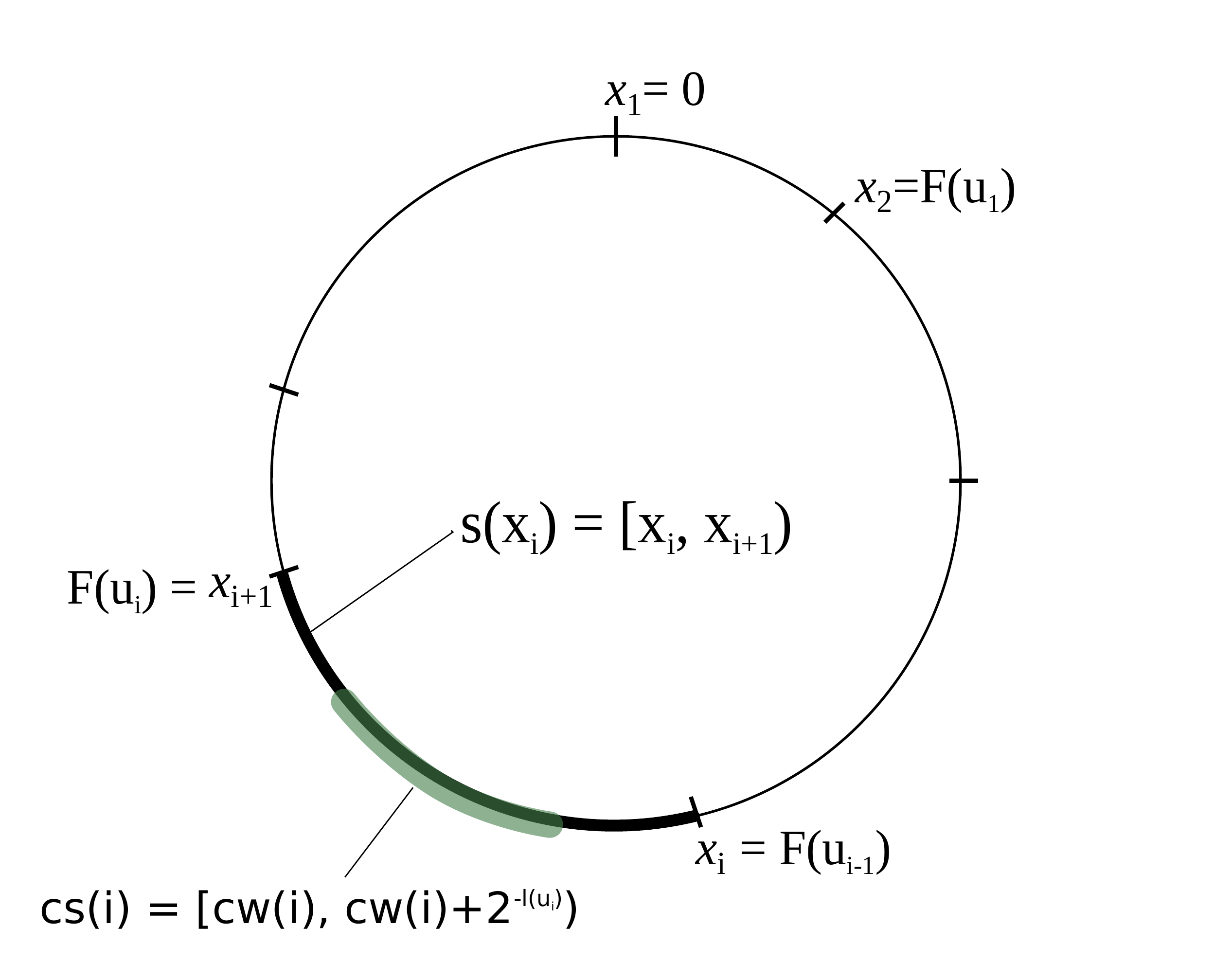}
	\caption{Segments and code segments}
	\label{fig:Cont_Disc}
	\end{subfigure}
    \centering
    \begin{subfigure}[b]{0.25\linewidth}
        \centering
        \includegraphics[width=\linewidth]{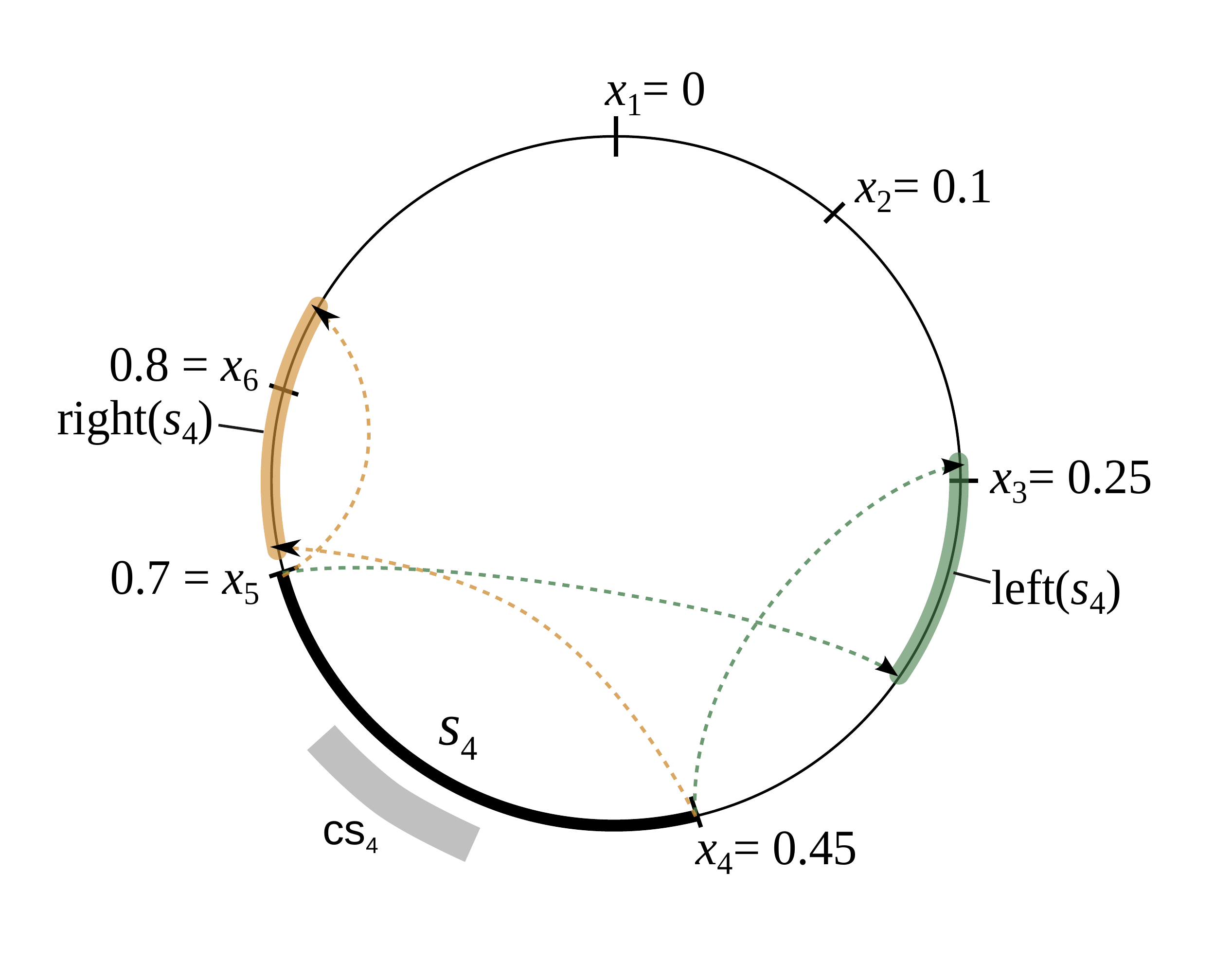}
        \caption{Left and right images of $s_4$}
        \label{fig:example_cont}
    \end{subfigure}
    \begin{subfigure}[b]{0.25\linewidth}
        \centering
        \includegraphics[width=\linewidth]{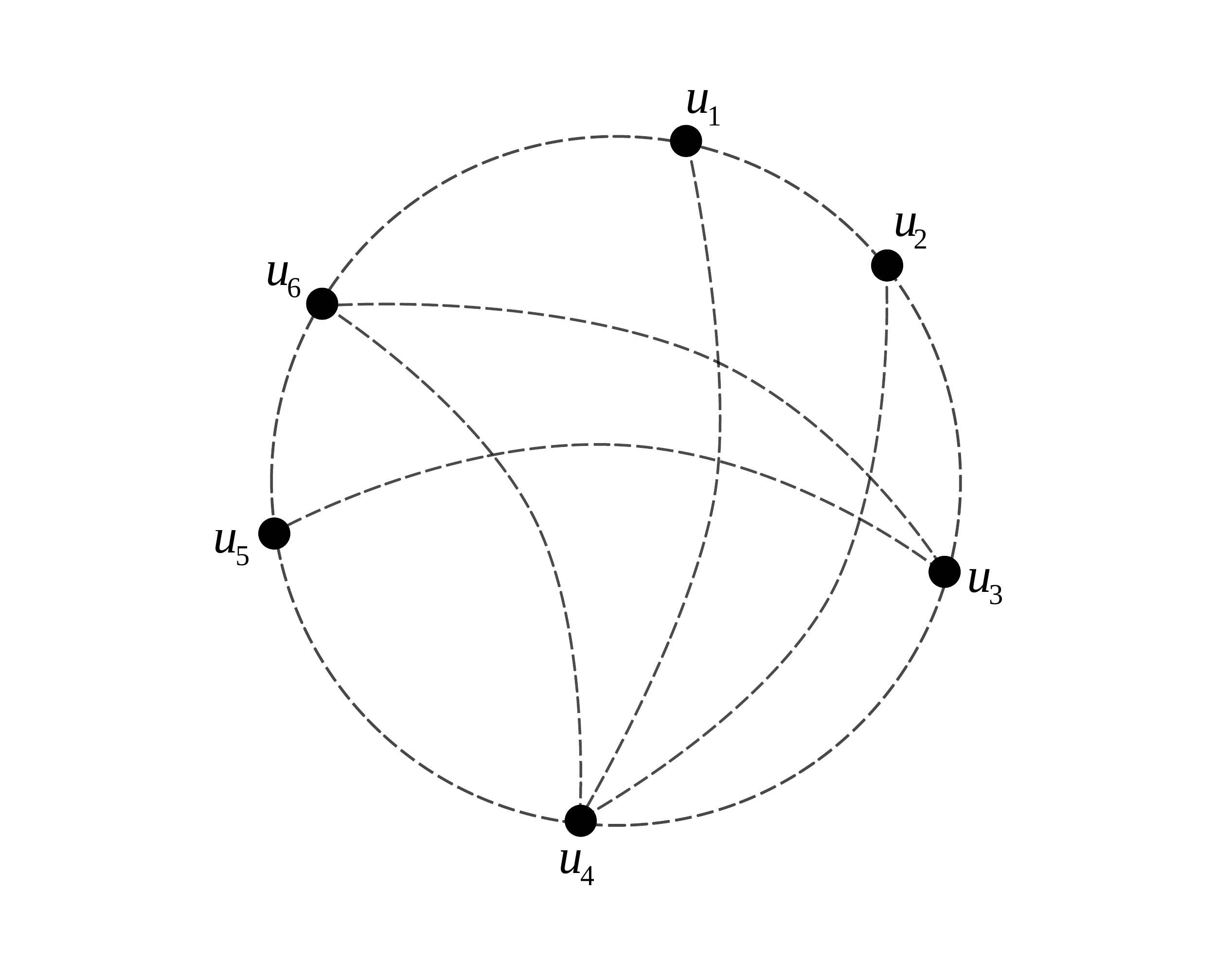}
        \caption{The resulting discrete graph $G_\vec{x}$.}
        \label{fig:example_disc}
    \end{subfigure}
    \caption{Illustration of  \System construction.}
    \label{fig:sysconstr}
\end{figure*}

\vspace{-2mm}

\section{\System Routing}\label{sec:algorithms}
\vspace{-1mm}

We investigate the routing efficiency for networks produced with our approach.

\paragraph{Routing Algorithms.}
In the spirit on the original continuous-discrete 
approach~\cite{naor2007novel} and de Bruijn graphs~\cite{bruijn1946combinatorial}, 
greedy routing can be done using two basic methods, 
\emph{forward} and \emph{backward} routing. 
Both methods were previously used for fixed length addresses, 
and thus in our construction require some adjustments due to the variable length of the node IDs.
We start with \emph{\textbf{forward routing}}.
Let $\sigi{i}=\cw{i}$ be the binary code and the ID for $u_i$ 
and let $\sigi{i}_t$ denote its suffix of length $t$.
Denote by $0.\sigi{i}_t$ and $1.\sigi{i}_t$ the concatenation 
of a bit to the string $\sigi{i}_t$. For every point $y \in I$ and 
for every \svr $u_i$, we define the function $\w{\sigi{i}_t, y}$ 
in the following recursive manner:

%
\begin{subequations}
    \begin{align}
        & \w{\sigi{i}_0, y} = y \\
        & \w{0.\sigi{i}_t, y} = \ledge{\w{\sigi{i}_t, y}} \\
        & \w{1.\sigi{i}_t, y} = \redge{\w{\sigi{i}_t, y}}
    \end{align}
\end{subequations}
In other words, $\w{\sigi{i}_t, y}$ is the point reached by a 
walk that starts at $y$ and proceeds according to $\sigi{i}_t$ 
from the least significant bit to the most significant bit of $\sigi{i}_t$.  

When \svr $u_i$ initiates a routing to the \svr $u_j$, 
the header of the message $u_i$ sends should contain the following information:
The destination ID $\sigi{j}=\cw{j}$, the source ID $\cw{i}$ (which is $u_i$s location), a counter $t$ initially set to 0, 
and a routing mode flag $f$, which defines if we use the forward ($f=1$) or backward ($f=0$) routing method.

%
The starting point of routing is at the source $u_i$ and with $t=0$ and $\w{\sigi{j}_0, \cw{i}}$. Upon receiving a message a node $u_{k}$ does the following:

\begin{algorithm}[H]
    \caption{Forward Routing - at node $u_{k}$}
        \begin{algorithmic}[1]
      		\State \textbf{if} $u_{k}$ is the destination: \textbf{done}.
            	\State \textbf{find} the node  $\svi{\mathrm{next}}$ 
            	which covers $\w{\sigi{j}_{t+1}, \cw{i}}$ 
            	\State \textbf{increase} $t$  and \textbf{forward} the message to  $\sv{\mathrm{next}}$
    \end{algorithmic}
        \label{alg:forward}
\end{algorithm}



\begin{lemma} \label{lem:rplf}
    For any two \svrs, a source $u_i$ and a destination $u_j$, the \emph{forward routing} will always reach the destination \svr and the routing path length will be at most $\ell_j$ hops.
\end{lemma}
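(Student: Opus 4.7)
The plan is to establish two things in tandem: (i) every iteration of Algorithm~\ref{alg:forward} traverses an actual edge of $G_{\vec x}$, and (ii) after at most $\ell_j$ such iterations the walk point lies inside $\sv{j}$, so that the termination check finds $\svi{j}$ as the current node.

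The backbone is a closed-form description of the walk, proved by induction on $t$. Viewing $\cw{i}$ as the fractional binary $0.c_{i,1}c_{i,2}\cdots c_{i,\ell_i}$ and writing $\sigi{j}=\cw{j}=c_{j,1}c_{j,2}\cdots c_{j,\ell_j}$, the claim is
\[
\w{\sigi{j}_t,\cw{i}} = 0.c_{j,\ell_j-t+1}\,c_{j,\ell_j-t+2}\cdots c_{j,\ell_j}\,c_{i,1}\cdots c_{i,\ell_i}.
\]
The base case $t=0$ is immediate from $\w{\sigi{j}_0,\cw{i}}=\cw{i}$. For the inductive step, the recursion gives $\w{b.\sigi{j}_t,\cw{i}}=\ledge{\w{\sigi{j}_t,\cw{i}}}$ or $\redge{\w{\sigi{j}_t,\cw{i}}}$ with $b=c_{j,\ell_j-t}$; since $\ledge{y}=y/2$ inserts a $0$ and $\redge{y}=(y+1)/2$ inserts a $1$ into the most-significant fractional slot (as recalled in Section~\ref{sec:prelim}), this prepends exactly $c_{j,\ell_j-t}$ to the current expansion. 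Plugging in $t=\ell_j$, the leading $\ell_j$ bits of the walk point are precisely $\cw{j}$, so Property~\ref{prp:prefix} places the point in $\cs{j}$, and Claim~\ref{clm:cwcs} lifts this to $\w{\sigi{j}_{\ell_j},\cw{i}} \in \sv{j}$.

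For legality of the hops, note that $\svi{i}$ covers $y_0:=\cw{i}$ by Claim~\ref{clm:cwcs}, so the walk is anchored at the source. At a generic step, the current node $\svi{k}$ covers $y_t:=\w{\sigi{j}_t,\cw{i}}$ and $y_{t+1}$ equals $\ledge{y_t}$ or $\redge{y_t}$; since $(y,\ledge{y})$ and $(y,\redge{y})$ are edges of the continuous graph $G_c$ for every $y \in I$, the discretisation rule of Section~\ref{sec:system} yields the corresponding edge of $G_{\vec x}$ between $\svi{k}$ and $\svi{\mathrm{next}}$. Combined with the preceding paragraph, after at most $\ell_j$ iterations the algorithm arrives at $\svi{j}$, which bounds the number of hops by $\ell_j$.

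The main obstacle is the bit-accounting in the induction: since the walk consumes the bits of $\sigi{j}$ from LSB to MSB while each $\ledge/\redge$ prepends a bit at the high end, one must carefully verify that the bit of $\cw{j}$ consumed at step $t+1$ lands at position $\ell_j-t$ of the walk point (counting from the binary point) and that after $\ell_j$ prependings the block of bits reads as $\cw{j}$ itself rather than its reverse. Once this alignment is pinned down, the remainder of the lemma follows directly from Property~\ref{prp:prefix}, Claim~\ref{clm:cwcs}, and the discretisation rule.
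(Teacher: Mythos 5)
Your proposal is correct and follows essentially the same route as the paper's proof: show the continuous walk ends at the concatenation $\concat{\cw{j}}{\cw{i}}$, invoke Property~\ref{prp:prefix} and Claim~\ref{clm:cwcs} to place it in $\cs{j} \subseteq \sv{j}$, and use the discretisation rule to turn each continuous hop into an edge of $G_{\vec{x}}$. The only difference is presentational: you make the bit-alignment explicit via an induction on $t$, whereas the paper asserts the concatenation form of the final location directly.
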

\begin{proof}
    We first claim that routing on $G_c$ will reach $\sv{j}$ in $\ell_j$ hops.
     The routing starts at $\cw{i}$. Every hop effectively inserts 0 or 1 
     into the left of the current location, which defines a 
     new location on $I$. By the definition of {Forward Routing}, 
     every hop is done according to the appropriate bit of $\cw{j}$ 
     starting from the least significant bit. So the final location $z$ will be 
     the concatenation of the $\cw{j}$ and $\cw{i}$ codewords, 
     $z = \concat{\cw{j}}{\cw{i}}$ (where $\oplus$ is the 
     concatenation operator of two strings). 
     By Property~\ref{prp:prefix} and 
     Claim~\ref{clm:cwcs}, 
     $z=\concat{\cw{j}}{\cw{i}} \in \cs{j} \subseteq \sv{j}$, and therefore 
     it is covered by \svr $u_j$. 
	The $\cw{j}$ bit string length is equal to $\ell_j$, 
	therefore the routing path length in  $G_c$ will be at most $\ell_j$ hops. 
 To conclude the proof we note that every hop in the 
 continuous graph between $x$ and $y$ is possible in the 
 discrete graph since if $x \in \sv{k}$ and $y \in \sv{k'}$ 
 there must be an edge (by definition) between $u_{k}$ and $u_{k'}$.
\end{proof}

Let us provide a small forward routing example from the network of Table \ref{tab:examplenetwork} and Figure \ref{fig:example_disc}.
Consider \svr $u_{6}$ (ID $1110$) to be the source and \svr $u_{4}$  (ID $100$) to be the destination.
In the continuous graph the route starts at \svr $u_{6}$ at the point $0.1110$, 
and forwarding is according to the destination address bits (from least to most significant),
to the location $0.{\bf 0}1110$, then $0.{\bf 00}1110$, and last $0.{\bf 100}1110$.
In the discrete  graph, the message is sent from $u_{6}$ to $u_{3}$ to $u_{2}$ to $u_{4}$ and has 3 hops, the length of the destination ID.
Note that the last point $0.{\bf 100}1110$ is guaranteed to belong to the \emph{code segment} of  $u_{4}$ since its prefix is $100$. 


We now consider the second routing algorithm, called \emph{\textbf{backward routing}}.
Assume \svr $u_i$ wants to route to \svr $u_j$. 
Now, a message traveling 
from $u_i$ to $u_j$ would start at the point $\concat{\sigi{i}}{\sigi{j}} 
\in \cs{i}$ and travel the path backwards, along the backward edges, 
until it has reached $\cs{j}$. 
The source \svr $u_i$ creates messages that contain the 
source $ID$ equal to $\cw{j}$, target $ID$ $\cw{j}$, 
counter $t$ initially set to 0 and routing mode flag $f=0$, 
which defines the backward routing method. 
The following is a more formal description of the protocol at node $u_{k}$:

\begin{algorithm}[H]
    \caption{Backward Routing - at node $u_{k}$}
        \begin{algorithmic}[1]
      		\State \textbf{if} $u_{k}$ is the destination: \textbf{done}.
            	\State \textbf{find} the node  $\svi{\mathrm{next}}$ which covers $\concat{\sigi{i}_{l(i) - t -1}}{\cw{j}}$ 
            	\State \textbf{increase} $t$  and \textbf{forward} the message to  $\svi{\mathrm{next}}$
    \end{algorithmic}
    \label{alg:backward}
\end{algorithm}


Similarly to forward routing we can bound 
the hops:

%

\begin{lemma} \label{lem:rplb}
    For any source $u_i$ destination $u_j$ pair, 
    the \emph{backward routing} will always reach the destination \svr, and the routing path length is at most $\ell(i)$ hops.
\end{lemma}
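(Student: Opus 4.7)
The plan is to mirror the structure of the proof of Lemma~\ref{lem:rplf}, but with the backward edge $\bedge{x} = 2x \bmod 1$ replacing the left/right edges. The key observation I would rely on is that, when a point $y \in I$ is written in binary as $0.b_1 b_2 b_3 \dots$, applying $\bedge{\cdot}$ yields $0.b_2 b_3 \dots$, i.e., it strips away the most significant bit. Thus, starting from a point whose binary prefix is $\sigi{i}$, each backward hop peels off exactly one bit of $\sigi{i}$.

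First I would show that the notional starting point of the walk in the continuous graph $G_c$ is $y_0 = \concat{\sigi{i}}{\cw{j}}$, and that this point lies in $\sv{i}$. Indeed, $\cw{i} = \sigi{i}$ is a prefix of $y_0$, so by Property~\ref{prp:prefix} and Claim~\ref{clm:cwcs} we have $y_0 \in \cs{i} \subseteq \sv{i}$; consequently the source $u_i$ covers $y_0$ as required.

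Next I would argue, by induction on $t$, that after $t$ backward hops the current point equals $y_t = \concat{\sigi{i}_{\ell(i) - t}}{\cw{j}}$. The base case $t=0$ is the statement above, and the inductive step uses exactly the bit-stripping property of $\bedge{\cdot}$, which matches the \textbf{find} step of Algorithm~\ref{alg:backward} (where the node covering $\concat{\sigi{i}_{\ell(i)-t-1}}{\cw{j}}$ is selected). Setting $t = \ell(i)$ gives $y_{\ell(i)} = \cw{j}$, which by Property~\ref{prp:prefix} and Claim~\ref{clm:cwcs} belongs to $\cs{j} \subseteq \sv{j}$; hence the destination $u_j$ covers this point and the route terminates after at most $\ell(i)$ hops.

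Finally, as in the proof of Lemma~\ref{lem:rplf}, I would note that every continuous hop from a point in $\sv{k}$ to a point in $\sv{k'}$ is realizable as a discrete edge $(u_k, u_{k'})$ in $G_{\vec{x}}$ by the definition of the discretization, so the discrete backward route uses exactly as many hops as the continuous walk. The only mildly delicate point, which I would treat carefully, is the bookkeeping on the suffix index: ensuring that the algorithm's counter $t$ is consistent with the inductive invariant $y_t = \concat{\sigi{i}_{\ell(i)-t}}{\cw{j}}$ so that the $\ell(i)$-th hop lands precisely on $u_j$ (or that the early-termination check on line~1 fires before then). Aside from this indexing, the argument is a direct dual of the forward case.
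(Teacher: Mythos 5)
Your argument is correct and is exactly the intended one: the paper omits the proof of Lemma~\ref{lem:rplb}, leaving it as the direct dual of the proof of Lemma~\ref{lem:rplf}, which is precisely what you carry out (start at $\concat{\sigi{i}}{\cw{j}} \in \cs{i} \subseteq \sv{i}$, strip one most-significant bit per backward hop, land in $\cs{j} \subseteq \sv{j}$ after at most $\ell(i)$ hops, and map continuous hops to discrete edges by the discretization rule). Your handling of the suffix-index bookkeeping matches Algorithm~\ref{alg:backward}, so no gap remains.
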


We can now bound the expected path length.

\paragraph{Expected Path Length.}
We proceed to characterize the expected routing length for a request distribution matrix $\mat{R}$ with source and destination marginal distributions $\vec{p_s}$ and $\vec{p_d}$ and corresponding entropy measures of $\vec{p_s}$ and $\vec{p_d}$, respectively. 
%

In particular, we are interested in algorithm  $\mathcal{A}$ that uses \emph{forward routing} whenever $H(\vec{p_s}) \geq H(\vec{p_d})$ and \emph{backward routing} when $H(\vec{p_s}) \leq H(\vec{p_d})$. The following theorem bounds the expected path length of $\mathcal{A}$.

\begin{theorem} \label{th:erplb}
    For any request distribution $\mathcal{R}$, the expected path length  satisfies
$	      \textrm{EPL}(\mathcal{R}, G, \mathcal{A}) < \min\{H(\vec{p_s}), H(\vec{p_d})\} + 2.
$
\end{theorem}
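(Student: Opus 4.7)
The plan is to split into the two symmetric cases based on the comparison of $H(\vec{p_s})$ and $H(\vec{p_d})$, and in each case rewrite the expected path length as an expected Shannon-Fano-Elias code length with respect to the smaller-entropy marginal. Note that by Eq.~\eqref{eq:minp}, the distribution $\vec{p}$ used to place the points and define the codewords $\cw{i}$ (and hence the lengths $\ell_i$) is precisely the one minimizing $H(\cdot)$ among $\{\vec{p_s},\vec{p_d}\}$. The algorithm $\mathcal{A}$ was designed so that in each case the per-request routing bound involves exactly the code length assigned by this same distribution, which is what makes the argument go through cleanly.

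First I would handle the case $H(\vec{p_s})\geq H(\vec{p_d})$. Here $\vec{p}=\vec{p_d}$, so the codewords have lengths $\ell_j = \lceil\log(1/p_{d,j})\rceil+1$, and $\mathcal{A}$ uses forward routing. By Lemma~\ref{lem:rplf}, the length of the route from $u_i$ to $u_j$ is at most $\ell_j$. Plugging into the definition of $\textrm{EPL}$ and using that $\sum_i R_{ij}=p_{d,j}$,
\begin{align*}
\textrm{EPL}(\mathcal{R},G,\mathcal{A}) &\leq \sum_{i,j} R_{ij}\,\ell_j \\
&= \sum_{j} p_{d,j}\,\ell_j \;=\; L_{SFE}(\vec{p_d}).
\end{align*}
Applying the upper bound in Eq.~\eqref{eq:eclbound} yields
$L_{SFE}(\vec{p_d}) < H(\vec{p_d})+2 = \min\{H(\vec{p_s}),H(\vec{p_d})\}+2$.

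The second case, $H(\vec{p_s})\leq H(\vec{p_d})$, is fully symmetric: now $\vec{p}=\vec{p_s}$, the code lengths are $\ell_i=\lceil\log(1/p_{s,i})\rceil+1$, and $\mathcal{A}$ uses backward routing, which by Lemma~\ref{lem:rplb} routes from $u_i$ to $u_j$ in at most $\ell_i$ hops. Averaging with $\sum_j R_{ij}=p_{s,i}$ gives $\textrm{EPL}\leq \sum_i p_{s,i}\ell_i = L_{SFE}(\vec{p_s}) < H(\vec{p_s})+2$, again equal to $\min\{H(\vec{p_s}),H(\vec{p_d})\}+2$. Combining the two cases yields the theorem.

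The only subtle point, and the one I would emphasize, is the alignment between the distribution used in construction and the marginal used for averaging: the forward-routing bound $\ell_j$ is useful only when the destinations are distributed according to $\vec{p}$ (so that $\sum_j p_{d,j}\ell_j$ becomes $L_{SFE}(\vec{p_d})$), and analogously for backward routing with sources. The choice of $\vec{p}$ in Eq.~\eqref{eq:minp} together with the case split in $\mathcal{A}$ ensures exactly this alignment, which is why the bound comes out to the \emph{minimum} entropy rather than, say, the maximum. No nontrivial obstacle remains once this pairing is made explicit.
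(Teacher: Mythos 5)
Your proof is correct and follows essentially the same route as the paper: bound each route by the relevant codeword length via Lemmas~\ref{lem:rplf} and~\ref{lem:rplb}, average against the marginal matching the construction distribution $\vec{p}$ of Eq.~\eqref{eq:minp}, and apply the Shannon-Fano-Elias bound of Eq.~\eqref{eq:eclbound}. The only difference is presentational: you make the two cases explicit, whereas the paper writes the single chain in terms of $\ell(j)$ and $\sum_i R_{ij}=p_j$ with the case distinction left implicit.
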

\begin{proof}
    By Lemmas \ref{lem:rplf} and \ref{lem:rplb}, 
    for any two \svrs source $u_i$ and destination $u_j$, 
    the routing path length is at most the codeword length. 
    By  Eq.~\eqref{eq:minp} $\vec{p}$ is the marginal distribution with minimum entropy
    and the distribution by which we build the network. 
    Based on Eq.~\eqref{eq:ecl} and Eq.~\eqref{eq:eclbound}, 
    the expected routing path length is
    \begin{align*}    
         \textrm{EPL}(\mathcal{R}, G, \mathcal{A})  &= 
         \sum\limits_{u_i,u_j \in V} R_{ij} \cdot Route_{G, \mathcal{A}}(i,j) \nonumber \\
           &\le \sum\limits_{u_j \in V}\sum\limits_{u_i \in V} R_{ij} \cdot \ell(j) 
           \\ & =  \sum\limits_{u_j \in V}  \ell(j) \sum\limits_{u_i \in V} R_{ij} \nonumber \\
           &=  \sum\limits_{u_j \in V} p_j \cdot \ell(j) = \sum\limits_{u_j \in V} p_j(\lceil \log{\frac{1}{p_j}} \rceil + 1)\nonumber \\
            &< \sum\limits_{u_j \in V} p_j(\log{\frac{1}{p_j}} + 2) = H(\vec{p}) + 2 
    \end{align*}
\end{proof}


A nice observation is that we can propose an \emph{\textbf{Improved Routing Algorithm}}
by combining \emph{forward routing} and \emph{backward routing}. 
Each node that initiates a routing decides on the routing mode. 
If the destination \svr codeword is shorter than the source codeword, 
then it selects the \emph{forward routing} mode, 
otherwise it selects the \emph{backward routing} mode. 
A relay \svr process according to mode, defined by source \svr.
Let the improved algorithm be denoted by $\mathcal{A}^*$. 
\begin{clm} \label{clm:rpl}
    For any two \svrs source $u_i$ and destination $u_j$, 
    the routing path length using the improved 
    algorithm $\mathcal{A}^*$ will be $\min(\ell(i), \ell(j))$.
\end{clm}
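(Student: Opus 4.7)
The plan is to prove the claim by a direct case analysis on which of $\ell(i)$ and $\ell(j)$ is smaller, invoking Lemmas~\ref{lem:rplf} and~\ref{lem:rplb} in each case. The claim essentially says nothing new beyond the two prior lemmas: it just observes that the source $u_i$ has enough local information ($\ell(i)$ is known to itself, and $\ell(j)$ can be read off the destination ID $cw_j$ that it would include in the header anyway) to pick whichever of the two routing modes yields the shorter bound.

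Concretely, I would proceed as follows. First I would recall the semantics of $\mathcal{A}^*$: the source compares $\ell(i)$ and $\ell(j)$ and sets the mode flag $f$ accordingly; every relay then obeys $f$, so the entire route is run in a single mode (either Algorithm~\ref{alg:forward} throughout, or Algorithm~\ref{alg:backward} throughout). This matters because it means the guarantees of the existing lemmas apply unchanged. Then I would split into two cases: (i) if $\ell(j) \leq \ell(i)$, then $\mathcal{A}^*$ runs forward routing, so by Lemma~\ref{lem:rplf} the path length is at most $\ell(j) = \min(\ell(i), \ell(j))$; (ii) if $\ell(j) > \ell(i)$, then $\mathcal{A}^*$ runs backward routing, so by Lemma~\ref{lem:rplb} the path length is at most $\ell(i) = \min(\ell(i), \ell(j))$. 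Combining the two cases gives the claim.

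There is no real obstacle here—the content was already inside Lemmas~\ref{lem:rplf} and~\ref{lem:rplb}; the only care required is the justification that the source can indeed make the mode choice locally (it knows $cw_i$ and is given $cw_j$ as part of the request, hence knows both lengths), and that nothing in Algorithms~\ref{alg:forward}--\ref{alg:backward} prevents either algorithm from being invoked with an arbitrary source--destination pair. Both of these are immediate from the construction, so a couple of sentences plus a two-line case split suffices for the full proof.
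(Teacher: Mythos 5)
Your proposal is correct and follows exactly the argument the paper intends: the claim is stated without an explicit proof precisely because it is an immediate case split on $\min(\ell(i),\ell(j))$, with Lemma~\ref{lem:rplf} covering the forward case ($\ell(j)\le\ell(i)$) and Lemma~\ref{lem:rplb} the backward case, plus the observation that the source knows both codeword lengths and fixes the mode flag for the whole route. The only cosmetic remark is that, since both lemmas give upper bounds ("at most"), the claim is really a bound of $\min(\ell(i),\ell(j))$ rather than an exact equality, which is exactly what your two-case argument delivers.
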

%
In other words, combining forward and backward routing can only help and $\textrm{EPL}(\mathcal{R}, G, \mathcal{A}^\ast) \le \textrm{EPL}(\mathcal{R}, G, \mathcal{A})$.



\paragraph{Routing Under Failure.}
In case of edge failures, our routing algorithms could be easily resumed 
by sending the message to \emph{any} available neighbor.
We add this feature to our algorithms, and every time when a next hop 
edge fails at \svr $u_i$, we select uniformly at random any valid edge to 
a neighbor
 \svr $u_j$, reset the routing message and send it to $u_j$, 
 to continue routing as if it is a new route starting from  \svr $u_j$. 
 Our construction contains rings, therefore to prevent infinite loops, 
 we will define the maximum routing length restrictions (TTL).
 
Next we prove other important network properties like sparsity, fairness and robustness.


\vspace{-1mm}

\section{\System Network Properties}
\label{sec:properties}

Let us take a closer look at the basic connectivity properties of the networks designed by our approach. We start with deriving basic connectivity properties, especially regarding sparsity, degree fairness, and robustness. Our results demonstrate that, even though the designed networks have low degree in expectation, each network is well-connected.
\paragraph{Sparsity.}
Similar to the original Continuous-Discrete approach, we can prove that our network is sparse.
The proof is essentially the same as in the original construction and we omit it here.
\begin{proposition} \label{th:edgenum}
    The total number of edges in
    $G$, without the ring edges, is at most $3n-1$.
\end{proposition}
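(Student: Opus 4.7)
The plan is a counting argument on the images of the segments under the continuous left and right maps. For each node $u_i$, the out-edges in $G$ (excluding the ring) arise exactly from the overlaps of $\ell(s_i)$ and $r(s_i)$ with other segments, because backward edges are just reverses of left/right edges and therefore produce no new pairs. So I would bound the total number of edges by bounding $\sum_i (d_i^\ell + d_i^r)$, where $d_i^\ell$ and $d_i^r$ are the numbers of segments intersected by $\ell(s_i)$ and $r(s_i)$, respectively.

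The first key observation is that $\ell(s_i) = [x_i/2,\, x_{i+1}/2)$ is a contiguous interval in $[0,1/2)$ of length $p_i/2$, so it intersects exactly $1 + L_i$ segments, where $L_i$ is the number of boundary points $x_k$ that lie in the interior of $\ell(s_i)$. The analogous statement holds for $r(s_i) \subseteq [1/2,1)$ with some count $R_i$. The second key observation is that $\{\ell(s_1),\ldots,\ell(s_n)\}$ partitions $[0,1/2)$ and $\{r(s_1),\ldots,r(s_n)\}$ partitions $[1/2,1)$, so the collection of all $2n$ images partitions $[0,1)$. Consequently each of the $n$ boundary points $x_1,\ldots,x_n$ is interior to at most one image, giving $\sum_i (L_i + R_i) \le n$.

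Putting these together yields $\sum_i (d_i^\ell + d_i^r) \le 2n + n = 3n$. Since every non-ring edge of $G$ is witnessed by at least one of these image--segment intersections, this already bounds the edge count by $3n$. To sharpen to $3n-1$, I would observe that the wraparound segment $s_n = [x_{n-1},1) \cup [0,x_1)$ contributes a split image whose two pieces share the endpoint $0$ (and symmetrically the point $1/2$ falls on a boundary between a left image and a right image). A careful case analysis of these boundary coincidences shows that at least one of the potential interior contributions collapses, yielding $\sum_i (L_i + R_i) \le n - 1$ and hence the claimed $3n - 1$ bound.

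The main obstacle is precisely this boundary bookkeeping; the bulk of the argument is essentially just the partition-counting identity above, which is why the authors (and the original Naor--Wieder construction) can reasonably omit the details. I would therefore spend most of the write-up on carefully handling the wraparound segment $s_n$ and the coincidence of image endpoints with node endpoints, since everything else is a direct consequence of lengths adding to $1/2$ on each half of the circle.
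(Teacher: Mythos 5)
Your overall strategy---counting image--segment intersections by overlaying the partition $\{l(s_1),\dots,l(s_n),r(s_1),\dots,r(s_n)\}$ of $[0,1)$ with the partition into segments---is the natural one and matches the spirit of the (omitted, Naor--Wieder-style) argument. But as written the accounting has a genuine gap, and it is exactly at the place you defer to ``boundary bookkeeping.'' Because of the random shift $U_I$, generically no $x_k$ lies at $0$, so one segment wraps around the point $0$; its left image is $[x_n/2,1/2)\cup[0,x_1/2)$ and its right image is $[(x_n+1)/2,1)\cup[1/2,(x_1+1)/2)$, i.e.\ \emph{not} contiguous intervals. For these two images the formula ``number of segments intersected $=1+\#\{\text{interior boundary points}\}$'' fails (it becomes $2+\cdots$), so your first observation only yields $2n+2$ for the ``$+1$ per image'' term and the naive bound becomes $3n+2$, not $3n$. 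Moreover, the repair you propose---showing $\sum_i(L_i+R_i)\le n-1$---is false in general: with a generic shift, every one of the $n$ points $x_k$ lies in the interior of some image piece, so that sum equals exactly $n$; there is no slack to be found in the interior-point ledger (the coincidence you invoke at $0$ and $1/2$ concerns endpoints of \emph{image} pieces, not the segment endpoints $x_k$ being counted).

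The savings that actually bring the count down to $3n-1$ come from self-loops and duplicate edges, which your sketch never discounts. Concretely, the two split pieces adjacent to $0$, namely $[0,x_1/2)$ and $[(x_n+1)/2,1)$, are entirely contained in the wraparound segment itself, so the intersections they witness are self-intersections and contribute no edge; and the piece $[x_n/2,1/2)$ ending at $1/2$ and the piece $[1/2,(x_1+1)/2)$ beginning at $1/2$ are both images of the \emph{same} (wraparound) node and both meet the segment containing $1/2$, so together they contribute at most one new edge. Subtracting these forced wasted intersections from $3n+2$ is what gives $3n-1$. (Equivalently, one can charge each non-loop pair $(i,j)$ with $l(s_i)\cap s_j\neq\emptyset$ or $r(s_i)\cap s_j\neq\emptyset$ to either the image endpoint $l(x_i)$ resp.\ $r(x_i)$ or to $x_j$, note that each of these $3n$ points absorbs at most one pair, and observe that $r(x_n)=(x_n+1)/2$ always lies in $s_n$ itself, so that charge is necessarily wasted on a self-loop; this yields $3n-1$ uniformly, including the degenerate case $U_I=0$.) So the missing ingredient is not finer bookkeeping of which $x_k$ are interior, but an explicit treatment of self-intersections and of the duplicate neighbor at $1/2$; without it, your argument as stated only proves a bound of $3n+2$.
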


\paragraph{Fairness and Nodes Degree.}
Our requirement for fairness (Section \ref{sec:model}) was that node degree will be proportional to its activity and
so low activity nodes will not have a high degree.
Using the original Continuous-Discrete approach we are able to extend their results and show an upper bound on a node degree that is based on
its activity. Recall that in our construction the activity level of a node $\svi{i}$ is equal to $p_i = \abs{\sv{i}}$ and denoted as $p_i$.

\begin{dfn}
    Let the length of the minimal segment of $\vec{x}$ be denoted
     by $p_{\min} = \min\limits_i p_i$, and let the corresponding node be denoted by
    $\svi{\min}$. For $\svi{i}$ let $\rho_i = p_i/p_{\min} = 
    \abs{\sv{i}}/\abs{\sv{\min}}$.
\end{dfn}

\begin{proposition}
	The maximum out-degree of $\svi{i}$ without the ring edges is at most $\rho_i + 4$,
	and the maximum in-degree without the ring edges is at most $\lceil2\rho_i\rceil + 1$.
\end{proposition}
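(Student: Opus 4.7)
The plan is to follow the original Naor--Wieder counting argument but localize it so the bound depends on $p_i$ rather than on the global smoothness. First I would track the continuous-graph footprint of the edges incident to $s_i$: the out-images $l(s_i)=[x_i/2,x_{i+1}/2)$ and $r(s_i)=[(x_i+1)/2,(x_{i+1}+1)/2)$ are two arcs on $I$, each of length $p_i/2$, while the in-preimage $l^{-1}(s_i)\cup r^{-1}(s_i)$ equals $2s_i \bmod 1$, a single arc of length $2p_i$ on the circle $I$ (in case $s_i$ straddles $1/2$ this becomes two intervals in $[0,1)$ that wrap into one arc on the circle).

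The core technical step is a counting lemma: any arc of length $L<1$ on $I$ meets at most $\lceil L/p_{\min}\rceil+1$ of the segments $s_1,\ldots,s_n$. To prove it, order the intersected segments in the circular direction as $s_{j_1},\ldots,s_{j_k}$ and observe that each of the $k-1$ left endpoints $x_{j_2},\ldots,x_{j_k}$ must lie strictly inside the open arc: if some $x_{j_l}$ with $l\ge 2$ lay at or outside the arc's starting boundary, then $s_{j_{l-1}}$ would end at or before that boundary and thus fail to meet the arc, contradicting its inclusion in the list. Because consecutive left endpoints are separated by at least $p_{\min}$, the span from $x_{j_2}$ to $x_{j_k}$ is at least $(k-2)p_{\min}$; strict containment in an open arc of length $L$ then gives $(k-2)p_{\min}<L$, and since $k-2$ is a non-negative integer this sharpens to $k\le\lceil L/p_{\min}\rceil+1$.

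Applying the lemma is then mechanical. For the out-degree, the lemma on $l(s_i)$ and on $r(s_i)$ gives at most $\lceil\rho_i/2\rceil+1$ out-neighbors per side, and summing with $2\lceil\rho_i/2\rceil\le\rho_i+2$ yields out-degree at most $\rho_i+4$. For the in-degree, a single application to the arc $2s_i\bmod 1$ of length $2p_i$ gives in-degree at most $\lceil 2\rho_i\rceil+1$ directly.

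The only delicate point is the strict inequality $(k-2)p_{\min}<L$ in the counting lemma; the non-strict version $(k-2)p_{\min}\le L$ yields only $k\le\lfloor L/p_{\min}\rfloor+2$, which loses a constant exactly when $2\rho_i$ happens to be integral and is therefore weaker than the stated in-degree bound. Establishing that the interior left endpoints lie in the open arc rather than possibly on its closed boundary is what recovers the $\lceil\cdot\rceil+1$ form claimed in the proposition.
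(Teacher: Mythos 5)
Your proof is correct and is essentially the same argument the paper relies on: the proposition is stated without proof as a per-node localization of the Naor--Wieder bounds, and your interval-intersection counting (with the sharpened lemma that an arc of length $L$ meets at most $\lceil L/p_{\min}\rceil+1$ segments, needed to get $\lceil 2\rho_i\rceil+1$ rather than $2\rho_i+2$ for the in-degree) is exactly that localized argument. The only detail worth a remark is the single segment that wraps around $0$ (or straddles $1/2$): there $l(s_i)$ and $r(s_i)$ individually split, but their pieces recombine into two arcs of length $p_i/2$ each (the preimage of $s_i$ under the doubling map), so your counting applies unchanged and the bounds are unaffected.
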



Since our construction is randomized (based on the random shift $U_I$), we can provide a tighter bound on the expected degree of a node  $\svi{i}$, and show that
the expected out-degree and in-degree of each node $u_i$ is proportional to $n \cdot p_i$. 
\begin{lemma}
	The expected out-degree of node $u_{i}$ is $1+ 0.5 (n-3) \cdot p_i$
	and the expected in-degree is $0.5+  (n-1.5) \cdot p_i$.
	\label{prop:degree}
\end{lemma}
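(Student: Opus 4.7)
The plan is to use linearity of expectation together with the uniform randomness of the shift $U_I$. I would write
\[
D_i^{\mathrm{out}} = \sum_{k\ne i} \mathbf{1}\bigl[s_k \cap (\ell(s_i) \cup r(s_i)) \ne \emptyset\bigr], \qquad D_i^{\mathrm{in}} = \sum_{k\ne i} \mathbf{1}\bigl[s_k \cap b(s_i) \ne \emptyset\bigr],
\]
where $b(s_i) = 2 s_i \bmod 1$, and note that the two target sets have Lebesgue measures $p_i$ and $2 p_i$ respectively---already explaining the factor-of-two gap between the coefficients in front of $p_i$ in the two formulas.

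The core analytic step is to evaluate $\Pr[s_k \cap T \ne \emptyset]$ for each target $T \in \{\ell(s_i)\cup r(s_i),\, b(s_i)\}$. Writing $x_k = (U_I + F_{k-1}) \bmod 1$, the identity $y \in \ell(s_i) \cup r(s_i) \iff 2y \bmod 1 \in s_i$ combined with the uniformity of $U_I$ yields the clean per-boundary formulas
\[
\Pr[x_k \in \ell(s_i) \cup r(s_i)] = p_i, \qquad \Pr[x_k \in b(s_i)] = 2 p_i,
\]
valid for every $k$ (the former because the offset $(U_I + 2F_{k-1} - F_{i-1}) \bmod 1$ is uniform on $[0,1)$; the latter by an analogous doubling argument). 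Since the number of segments intersected by a half-open arc $J$ equals $1 + \#\{k : x_k \in \operatorname{int}(J)\}$, summing over $k$ translates the per-boundary probabilities into expressions of the form $c_0 + c_1 \cdot n p_i$, where $c_0$ comes from the ``baseline'' $+1$ per connected arc and $c_1 n p_i$ comes from the expected number of boundaries swept.

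The remaining work is to pin down the constants. This requires carefully handling (i) the two ring edges $(u_{i-1},u_i)$ and $(u_i,u_{i+1})$ added manually, which contribute to the additive constants $1$ and $\tfrac12$; (ii) the self-contribution from $s_i$ intersecting its own images (the boundaries $x_i, x_{i+1}$ are correlated with $\ell(s_i), r(s_i)$ in a non-generic way), which is part of the $-\tfrac32 p_i$ correction; and (iii) potential double-counting when a single segment $s_k$ straddles both $\ell(s_i)$ and $r(s_i)$, which live in the disjoint halves $[0,\tfrac12)$ and $[\tfrac12,1)$ separated by a gap of $\tfrac12 - \tfrac{p_i}{2}$. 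I expect (iii) to be the principal technical obstacle, since it goes beyond the clean boundary identities above; fortunately, the straddling event requires $p_k \ge \tfrac12 - \tfrac{p_i}{2}$, so in expectation it contributes only an $O(p_i)$ correction, which is exactly what is needed to obtain the coefficients $(n-3)$ and $(n-1.5)$ in place of $n$.
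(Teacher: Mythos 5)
Your two per-boundary identities are correct, and they are the right kind of tool: by the same uniform-offset argument you can even get the pairwise probabilities directly, since $s_k$ meets $\ledge{s_i}\cup\redge{s_i}$ exactly when $\bedge{s_k}=2s_k \bmod 1$ meets $s_i$, an event of probability $p_i+2p_k$ (and, symmetrically, $2p_i+p_k$ for the in-direction). But this is precisely where your plan fails as a proof of the stated constants. Summing your main terms gives an expected out-degree of the form $2+(n-1)p_i$ minus corrections and an expected in-degree of the form $1+2(n-1)p_i$ minus corrections, i.e.\ leading coefficients $n p_i$ and $2n p_i$, whereas the statement asks for $0.5\,n p_i$ and $n p_i$. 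The corrections (i)--(iii) you defer are all additive $O(1)$ or $O(p_i)$ terms; they cannot repair a multiplicative factor of $2$ in the leading term, so the step ``pin down the constants'' cannot terminate at $1+0.5(n-3)p_i$ and $0.5+(n-1.5)p_i$. A faithful execution of your own scheme yields $2+(n-3)p_i$ and $1+(2n-3)p_i$; you can check the out-degree by hand for $n=4$ equal segments with a random shift, where the exact value is $9/4$, not $9/8$.

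Two further remarks. First, the paper's own proof takes the direct pairwise route, writing $\E{d_i}=\sum_{j\neq i}\P{L_{ij}=1 \vee R_{ij}=1}$ and evaluating the union event as a window of length $p_j+0.5\,p_i$ for $U_I$; your (correct) boundary identities are in tension with that evaluation, because the image points $\ledge{x_i}$ and $\redge{x_i}$ move at half the speed of $U_I$ (the relevant uniform variable lives on an interval of length $1/2$, so its density is $2$), which is exactly where the factor of $2$ between your calculation and the stated constants originates. You must confront this discrepancy explicitly---either find the missing factor $1/2$ in your identities (there is none) or conclude that your method does not reproduce the stated formula---rather than absorb it into unspecified constants. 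Second, the ring edges should not enter at all: the quantity treated here, as in the preceding proposition, is the degree coming from the continuous-graph images excluding the ring, and the additive constants $1$ and $0.5$ are meant to arise from the $\sum_{j\neq i}p_j=1-p_i$ part of the pairwise sum, not from the edges $(u_{i-1},u_i)$ and $(u_i,u_{i+1})$; so item (i) of your plan computes a different quantity than the one claimed.
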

\begin{proof}
From the linearity of expectation, we have

{\footnotesize
\begin{align}
	\E{d_i} &= \E{ \sum_{j \neq i} L_{ij} + R_{ij}} =  \sum_{j \neq i} \P{L_{ij} = 1 \vee R_{ij} = 1}
	\label{eq:expectation}
\end{align}   
}

where each indicator random variable $L_{ij}$ and $R_{ij}$ becomes $1$ if the segment of node $u_j$ intersects that of the \emph{left} and \emph{right} images of $u_{i}$, respectively. 

Consider first the right-image intersection 
random variable $R_{ij}$, and denote by $\delta = x_j - x_i$ the distance between the start of segments $s_j$ and $s_i$. Since node $u_j$ can only be a right-neighbor of $u_i$ if segments $\redge{s_i}$ and $s_j$ intersect, we have that
\begin{align*}
\P{R_{ij} = 1} &= \P{	x_j - |\redge{s_i}| \leq \redge{x_i} \leq x_j + |s_j|}  \\
			&= \P{ x_i + \delta - \frac{p_i}{2} \leq \frac{x_i + 1}{2} \leq x_i + \delta + p_j} \\
			&= \P{ \delta - \frac{p_i}{2} \leq U_{[0,0.5)} \leq \delta + p_j }		
\end{align*}
where $U_{[0,0.5)} = \frac{1 - x_i}{2}$ is a random variable that lies uniformly in $[0, 0.5)$.

Similarly, for the left image
\begin{align*}
\P{L_{ij} = 1} &= \P{x_j - |\ledge{s_i}| \leq \ledge{x_i} \leq x_j + |s_j|}  \\
			&= \P{ x_i + \delta - \frac{p_i}{2} \leq \frac{x_i}{2} \leq x_i + \delta + p_j} \\
			&= \P{ \delta - \frac{p_i}{2} \leq U_{[0.5,1)} \leq \delta + p_j }		
\end{align*}
Separately, the intersection probabilities depend on $\delta$ (for instance, when $\delta > 0.5 (1+ p_i)$ the right intersection probability is always zero). The probability of the joint event however is independent of $\delta$.
\begin{align*}
\P{L_{ij} = 1 \vee R_{ij} = 1} &= \P{ \delta - \frac{p_i}{2} \leq U_{I} \leq \delta + p_j } \\
&= p_j + 0.5\,p_i		
\end{align*}
Substituting to Equation~\eqref{eq:expectation}, we get that the expected out-degree of $\svi{i}$ is:
\begin{align*}
	 \sum_{j \neq i} \left( p_j + 0.5\, p_i\right) = 0.5\, (n-3) p_i + 1, 
\end{align*}
and our claim follows.
Very similarly we can show the expected in-degree.
\end{proof}

\paragraph{Connectivity and Robustness.}
Our goal is to show that the \System topology is highly connected and robust to edge failures. In particular, we claim that in order to disconnect 
a set with high activity, one needs to fail many edges.
More formally for a set $S \subseteq V$, let $p_S$ denote its probability according to $\vec{p}$, i.e., $p_S=\sum_{\svi{i} \in S} p_i$.
The \emph{cut} $C(S, \bar{S})$ is the set of edges connecting $S$ to its compliment, $\bar{S}=V \setminus S$:  $C(S, \bar{S}) = \{(u,v) : u \in S, v \in \bar{S}, (u,v) \in E\}$.
We can claim the following. 

\begin{theorem} \label{th:cut}
    For any set of nodes $S \subseteq V$ s.t.~$p_S \le 1/2$, its expected number of edges in the cut is:
$	    \E{\abs{C(S,\bar{S})}} \ge \Omega(\frac{n p_S}{\log p_{\min}})
$%
\end{theorem}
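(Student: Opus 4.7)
The plan is to establish the cut lower bound via a direct linearity-of-expectation argument, reusing the pairwise edge probability already computed inside the proof of Lemma~\ref{prop:degree}. The key observation is that, although the probabilities of the individual left/right-edge events depend on the cyclic offset $\delta = x_j - x_i$ between two segments, the \emph{disjunction} $\{L_{ij} = 1 \vee R_{ij} = 1\}$ has probability exactly $p_j + 0.5\, p_i$, invariant under $\delta$, because it corresponds to the single event $\{U_I \in [\delta - p_i/2,\, \delta + p_j]\}$ whose length is position-independent and whose probability follows from the uniformity of $U_I$.

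I would begin by writing $|C(S, \bar S)| \geq \sum_{i \in S,\, j \in \bar S} X_{ij}$, where $X_{ij}$ indicates the existence of an $l$- or $r$-edge directed from $u_i$ to $u_j$ in $G_{\vec{x}}$; ignoring reverse-direction edges only weakens the lower bound. Taking expectations and plugging in the invariant pairwise formula,
\begin{align*}
    \E{|C(S, \bar S)|}
    &\geq \sum_{i \in S,\, j \in \bar S}\!\left(p_j + 0.5\, p_i\right) \\
    &= |S|\,(1 - p_S) + 0.5\, p_S\,(n - |S|).
\end{align*}

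A short case analysis on $|S|$ then finishes the argument. When $|S| \leq n/2$, the second summand is already at least $0.25\, n\, p_S$; when $|S| > n/2$, the first summand is at least $(n/2)(1/2) = n/4$, which under $p_S \leq 1/2$ and the mild assumption $\log(1/p_{\min}) \geq 2$ dominates $n\, p_S / \log(1/p_{\min})$. Either way the theorem follows; in fact the argument appears to deliver the stronger bound $\Omega(n\, p_S)$, so the $\log(1/p_{\min})$ factor in the stated theorem is loose.

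The only delicate step is the invariance of $\P{L_{ij} = 1 \vee R_{ij} = 1}$ with respect to the relative position of $u_i$ and $u_j$, which is inherited from the rotational randomization $U_I$ and is already handled within the proof of Lemma~\ref{prop:degree}; I do not expect any deeper structural obstacle. Consequently no spectral, routing-flow, or isoperimetric machinery seems necessary for this (as-stated) form of the theorem, although such tools would be needed if one wanted to tighten the constant or replace the loose $1/\log(1/p_{\min})$ with the sharper $\Theta(1)$ suggested by the computation above.
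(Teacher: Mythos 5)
Your argument is correct (at the same level of rigor as the paper's own degree lemma that you cite), but it is a genuinely different route from the paper's. The paper proves Theorem~\ref{th:cut} via expansion: it re-discretizes the continuous graph into uniform cells of size $2^{-r}$ with $r \approx \tfrac{1}{3}\log(1/p_{\min})$, invokes the $\Theta(1/r)$ edge expansion of the resulting de Bruijn graph to lower-bound the measure of the image/boundary of $S$ in $G_c$ by $\Theta\bigl(p_S/\log (1/p_{\min})\bigr)$, and only then counts, by indicators, how many segments of $\bar S$ hit that image --- this is exactly where the $1/\log(1/p_{\min})$ factor enters. You bypass expansion entirely: since the single random rotation $U_I$ makes the pairwise adjacency probability position-invariant and of order $p_i+p_j$ (the paper states $p_j+0.5\,p_i$; a careful circle computation, viewing $\{\ledge{y},\redge{y}\}$ as the preimage of the doubling map so that no wrap-around cases arise, actually gives $p_i+2p_j$ --- either constant serves your purpose), linearity of expectation over the pairs $S\times\bar S$ plus your two-case analysis on $\abs{S}$ yields $\mathbb{E}\bigl(\abs{C(S,\bar S)}\bigr) \geq \tfrac{1}{4}\,n\,p_S$, which is indeed stronger than the stated bound. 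This does not contradict the $\Theta(1/\log n)$ expansion of de Bruijn graphs, because the theorem (and your proof) controls the expected cut of one \emph{fixed} vertex set over the randomness of $U_I$; it says nothing about all $2^n$ cuts simultaneously in a fixed realization, where the logarithmic loss is unavoidable and the paper's expansion-based argument would be the relevant tool. So the trade-off is: your proof is more elementary, reuses the degree computation, and sharpens the per-set expectation bound to $\Omega(n p_S)$; the paper's proof is lossier here but is the template one would need for a worst-case (per-realization) robustness statement. The only points to tighten are the ones you already flag: the exact constant in the pairwise probability, and the (measure-zero/boundary) wrap-around issues, which are glossed over in the paper's Lemma~\ref{prop:degree} as well.
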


For example this theorem entails that if $\log p_{\min} = \Omega(\log n)$ then disconnecting a set with constant activity $p_S$ will require 
failing many edges, namely $\Omega(n/\log n)$ edges. Recall that in a tree network there are sets where this can be achieved by failing exactly one edge.

\begin{proof}
To prove this we will use the \emph{expansion} properties of de Bruijn graphs.
The \emph{edge expansion}~\cite{hoory2006expander} of a graph $G$ is defined as:
\begin{align}
h(G) = \min_{0 < \abs{S} \le \frac{n}{2}} \frac{E(S,\bar{S})}{\abs{S}}
\end{align}
Then for a graph with expansion $\alpha$ and a set $S$ (assume \wlg that $\abs{S} \le \abs{\bar{S}}$), the number of edges in the cut is at least 
$\abs{E(S,\bar{S})} \ge \alpha \cdot \abs{S}$. 
It is known that the expansion of a de Bruijn graph with $2^r$ nodes is $\Theta(1/r)$~\cite{Leighton:1991:IPA:119339}. Our first step will be to bound the image of $S$ in the continues graph.
Let $Im(S)$ denote the set of points $x \in I$ s.t. $x$ has a neighbour in $S$ in the continues graph $G_c$. We can claim:
\begin{clm}
\begin{align}\label{eq:img}
\abs{Im(S)} = \Theta(\frac{p_s}{\log p_{\min}})
\end{align}
\end{clm}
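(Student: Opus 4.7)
The plan is to bridge the continuous graph $G_c$ with a de Bruijn graph at an appropriate dyadic resolution and then invoke the classical edge-expansion bound $h(G_k) = \Theta(1/k)$ for the $k$-dimensional de Bruijn graph $G_k$~\cite{Leighton:1991:IPA:119339}. I will argue the $\Omega$ half of the claim in detail, since this is what is actually consumed by Theorem~\ref{th:cut}.

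First I fix a resolution $k = \lceil \log_2(1/p_{\min}) \rceil + O(1)$, chosen large enough that every segment $\sv{i}$ comfortably contains whole dyadic cells of length $2^{-k}$, and I partition $I$ into $N = 2^k$ such cells. The key observation is that the three continuous-graph moves $\ledge{x}$, $\redge{x}$, and $\bedge{x}$ act on $k$-bit binary expansions by shifting a bit into or out of the most significant position, so restricted to these cells they induce exactly $G_k$ (this is the same isomorphism already cited in Section~\ref{sec:prelim}). I then associate to $S$ the set $\tilde{S}$ of dyadic cells entirely contained in $A = \bigcup_{\svi{i} \in S} \sv{i}$. Because at most two boundary cells per segment fail to sit cleanly inside $A$, one has $|\tilde{S}|/N = p_S - O(n \cdot 2^{-k})$, and by enlarging $k$ by a constant this error becomes a small fraction of $p_S$, so $|\tilde{S}| = \Theta(p_S N)$ and, since $p_S \le 1/2$, also $|\tilde{S}| \le N/2$. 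Applying the expansion now gives $|E(\tilde{S}, \bar{\tilde{S}})| \ge h(G_k) \cdot |\tilde{S}| = \Omega(p_S N / k)$. Since each de Bruijn vertex has degree at most $3$, the number of distinct cells of $\bar{\tilde{S}}$ adjacent to $\tilde{S}$ is also $\Omega(p_S N / k)$, and each such cell lies entirely inside $Im(S)$ because every one of its points has a $G_c$-neighbour inside $A$. Multiplying by the cell measure $2^{-k}$ yields $|Im(S)| \ge \Omega(p_S / k) = \Omega(p_S / \log(1/p_{\min}))$.

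For the matching $O$-direction of the $\Theta$-claim, I would combine the trivial measure-pushing bound $|Im(S)| \le 3\, p_S$ (each of the three continuous maps is at most $2$-to-$1$) with the tightness side of the de Bruijn expansion on flat cell-subcubes, producing witness sets $S$ that realize $|Im(S)| = O(p_S / \log(1/p_{\min}))$. The main obstacle I anticipate is the bookkeeping at the boundary of $\tilde{S}$: segment endpoints are shifted by $U_I$ and are non-dyadic, so cells straddling the transition between $\tilde{S}$ and $\bar{\tilde{S}}$ must be treated carefully so as not to double-count and so as to preserve the $\Theta(p_S N)$ approximation. I plan to absorb this by taking $k$ a sufficiently large constant above $\lceil \log_2(1/p_{\min}) \rceil$ so that the $O(n \cdot 2^{-k})$ boundary error is subsumed into the $\Theta$ constants, and by relying on the fact that the de Bruijn expansion is stable under $o(|\tilde{S}|)$-size perturbations of the cell-set.
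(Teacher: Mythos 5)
Your proposal follows essentially the same route as the paper's proof: discretize $I$ into dyadic cells at resolution roughly $\log(1/p_{\min})$, identify the induced graph on cells with a de Bruijn graph, invoke its edge expansion $\Theta(1/k)$, convert cut edges into distinct boundary cells via the constant degree bound, and convert back to measure by multiplying by the cell length $2^{-k}$. It is correct up to harmless constant-factor slips --- the boundary error should be charged per segment of $S$, i.e.\ $O(\abs{S}\,2^{-k}) \le O(2^{-c}\,p_S)$ using $p_S \ge \abs{S}\,p_{\min}$ rather than $O(n\,2^{-k})$, and a boundary cell that is a \emph{forward} image of a cell of $\tilde{S}$ is only guaranteed to lie half inside $Im(S)$ --- none of which affects the $\Omega\bigl(p_S/\log(1/p_{\min})\bigr)$ bound actually consumed by Theorem~\ref{th:cut}.
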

\begin{proof}
Let $r= \lfloor \log p_{\min}/3 \rfloor$ and recall that if we discretize the continuos graph into uniform size segments of size $2^{-r}$, it results in a de Bruijn graph with $2^r$ nodes~\cite{naor2007novel}. Denote this graph by $G_r$. Since the expansion of $G_r$ is $\Theta(\frac{1}{r})$ and $S$ has $\Theta(\frac{p_S}{2^r})$ nodes in $G_r$,
the edge cut size, $\abs{C(S,\bar{S})}=\Theta(\frac{p_s 2^r}{r})$ in $G_r$. Since $G_r$'s maximum degree is 4 and the length of each segment is $2^{-r}$,
the claim follows.
\end{proof}
To bound $\E{\abs{C(S,\bar{S})}}$ we need now to bound the expected number of segments in $\bar{S}$ that intersect with $Im(S)$.
This could be done with indicator functions similar to the proof which bound the expected degree.
Recall that $p_s \le 1/2$ and we additionally assume \wlg that $\abs{S} \le n/2$ (if this is not the case, we can replace $S$ with $\bar{S}$ to our benefit).
$Im(S)$ may be a union of disjoint segment. Let $s'_1, s'_2, \dots s'_k$ denote these segments s.t. $Im(S) = \cup s'_i$ and $\abs{Im(S)} = \sum \abs{s'_i}$.
Assume $\bar{S}$ contains $\ell> n/2$ nodes with corresponding segments $v'_1, v'_2, \dots v'_\ell$. Let the indicator function $I_{i,j}$ denote if
segment $s'_i$ intersects with segment $u'_j$. Note that in this case node $u'_j \in \bar{S}$ will have an edge with a node in $S$.
We can now bound $\E{\abs{C(S,\bar{S})}}$ as 
\begin{align}
\E{\abs{C(S,\bar{S})}} = \E{\sum_{i,j} I_{i,j}} = \sum_{i,j} \E{I_{i,j}} 
\end{align}
Since $v'_j$ is uniformly distributed in $I$ we have
\begin{align}
\E{\abs{C(S,\bar{S})}} &= \sum_{i,j} \abs{s(u'_j)} + \abs{s'_i} \\ \notag
&=   \sum_{i=1}^k \sum_{j=1}^\ell \abs{s(u'_j)} +  \sum_{j=1}^\ell  \sum_{i=1}^k \abs{s'_i} \\ \notag
&\ge  \sum_{j=1}^\ell \abs{Im(S)}  \ge \frac{n}{2} \abs{Im(S)} 
\end{align}
This together with Eq.~\refeq{eq:img} concludes the proof.
 \end{proof}

In addition we can prove that $G_\vec{x}$ is 2-edge-connected so no removal of a single edge can disconnect the graph.
\section{Experiments}
\label{sec:sims}

We complement our formal analysis with a simulation study.

\paragraph{Setup and Methodology.}
The following experiments illustrate the properties which emerge when one enhances the continuous-discrete approach by taking the request distribution $\mat{R}$ into account.  
%
%
At the heart of our model lies the entropy of the marginal source and destination distributions $\vec{p_s}$ and $\vec{p_d}$. 
In order to test our construction across a range of possible entropies, 
we determined the per-node demand (the source or the destination) 
according to Zipf's law:
$	p_i =  i^{-s}/\sum_{j = 1}^n j^{-s}.
$
We aim to model a power law distribution where the parameter $s$  
controls the (possibly high) variance in the activity level of nodes, as well the entropy. 
It is known that:
\begin{align}
	H(\vec{p}) = \frac{s}{H_{n,s}} \sum_{i = 1}^n \frac{\log{i}}{i^s}  + \log{H_{n,s}},
	\label{eq:Zipf_entropy}
\end{align}
where the exponent $s \geq 0$ and $H_{n,s}$ is the $n$th generalized harmonic number. 
After constructing marginal distributions $\vec{p_s}$ and $\vec{p_d}$ according to Zipf's law (with different id permutations for each case) 
we construct the demand matrix $\mat{R}$ as the product distribution $\mat{R} = \vec{p_s} \vec{p_d}^\top$.

Admittedly, this is only one of many possible choices for the request distribution and, being a product distribution, our chosen $\mat{R}$ does not possess the strong locality properties which are sometimes present in real networks. We argue however that our choice is a natural one: Power law distributions have been observed and used in many contexts, and are well-studied. Moreover, by varying $s$, we can test our approach across the entire entropy spectrum, from the extremes of maximum entropy ($s=0$, uniform activity levels) to zero entropy ($s \gg 0$, very large variance). 
Notice that, even though this chosen demand distribution is deterministic, 
our construction is not---the ordering of the nodes on the ring can be arbitrary. 
To evaluate our approach across all possible orderings, we performed multiple iterations, 
permuting the node ordering randomly each time.

\paragraph{Fairness and Nodes Degree.}
We first investigate the properties of the network topology itself, 
regardless of the routing method. 
In particular, we would like to show that our design leverages 
the demand distribution to achieve a good trade-off between 
two competing objectives: \emph{(i)} Attaining a well connected 
network in expectation. This implies that paths between nodes that communicate often 
should be short. And \emph{(ii)} 
having a degree that is not only small, but also proportional to the demand. 

To this end, we generated networks of 300 \svrs, 
with two representative values of the Zipf's exponent, 
$s = 0.5$ and $s=1$ for the node activity distribution 
(larger values of $s$ result in an unrealistic distributions where 
more than 50\% of the probability mass is concentrated in a single node). 
Figure~\ref{fig:degree} depicts the node degree of each node 
$u_i$, averaged over 50 realizations, \emph{(vertical axis)}, as a function of its activity level $p_i$ \emph{(horizontal axis)}. 
As expected, in the original Continuous-Discrete 
approach (CD) shown in red there is no correlation between $p_i$ and node degree, 
as the construction ignores $\vec{p}$. In contrast, and similar 
to our analysis, in the proposed code-based approach (CB) shown in black node degrees are correlated with $p_i$ and proportional to 
$n p_i$. Nodes with low activity level will not have high degree, 
while nodes with high activity levels will have on average higher degrees. 
Note that this correlation is based on $\vec{p}$ which is the base for the 
construction. Since $\vec{p}$ can be, for example, the source 
distribution (and not the destination distribution), it could still be the case that 
a node is highly active as destination but has low degree. But as we require, the 
opposite will not happen, a node with low activity 
(either as source or destination) will not have a high degree.

\begin{figure*}[t]
	\centering
	\includegraphics[width=0.48\linewidth]{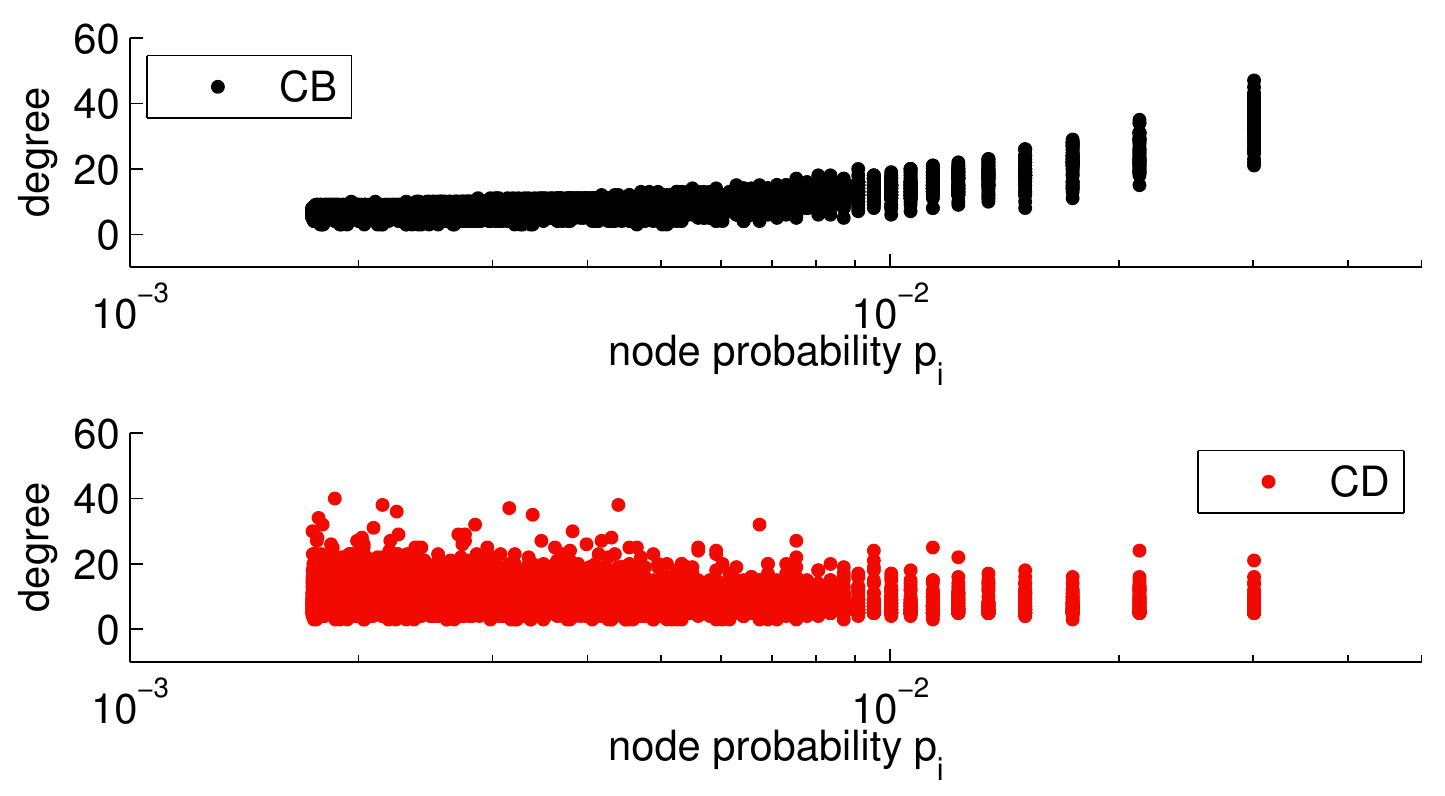}
	\includegraphics[width=0.48\linewidth]{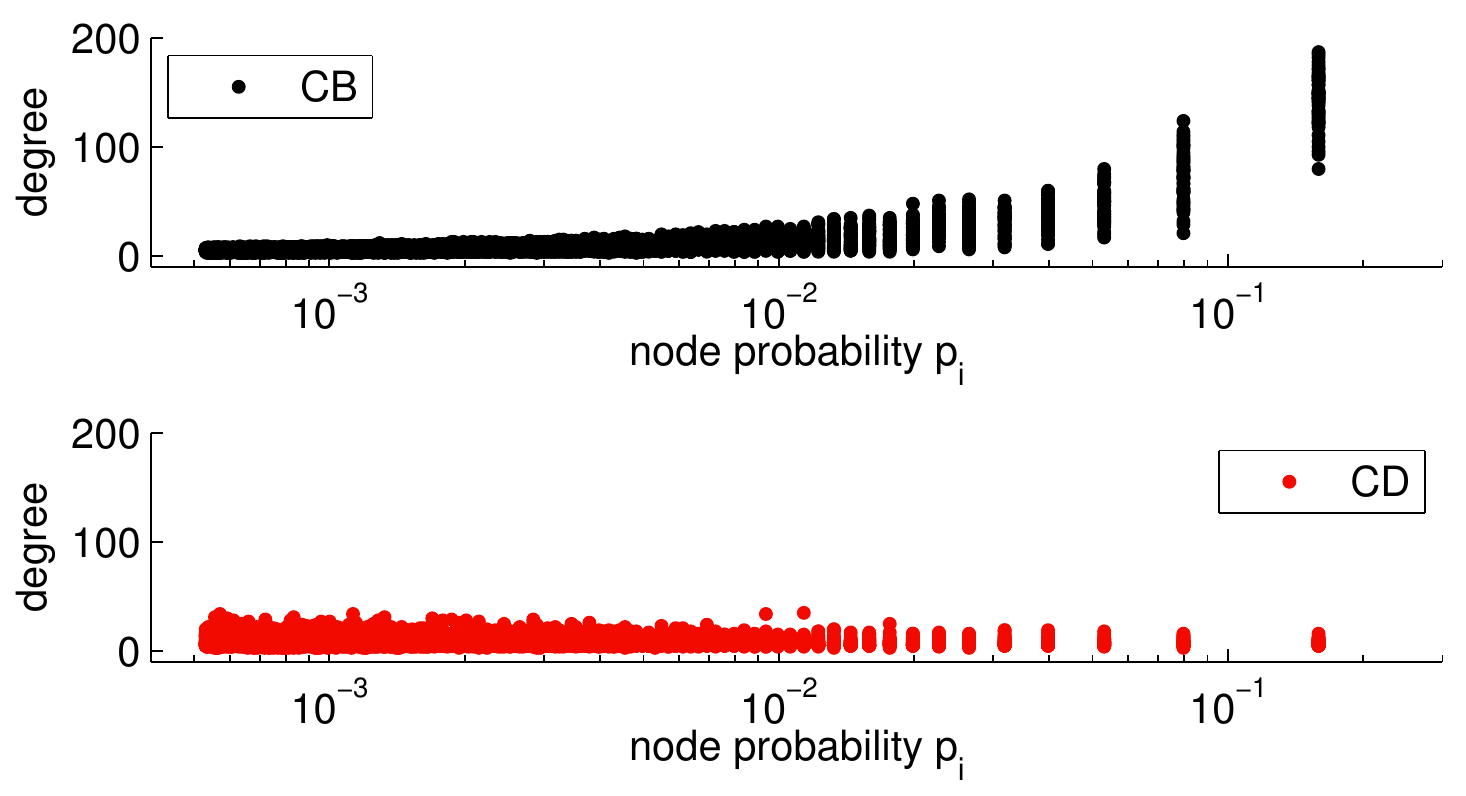}\\
    \caption{Node degree as a function of node probability. Code-based design ($CB$) demonstrate that low activity nodes have low degree and high activity nodes may have high degree.
\emph{Left} $s = 0.5$, \emph{right} $s = 1.0$.}
    \label{fig:degree}
\end{figure*}



\paragraph{Expected Path Length.}
We proceed to examine the efficiency of code-based routing (CBR) and to compare it to that in the original continuous-discrete approach (CDR). 
Formally, code-based routing (CBR) is the \emph{improved routing} based on Claim \ref{clm:rpl}, a routing that compares the source and destination IDs and
uses the shorter route based on forward or backward edges, Algorithms  \ref{alg:forward} and \ref{alg:backward}, respectively. We use CBR (fwd) to explicitly denote using forward routing on the \System network.
The original continuous-discrete approach (CDR) is using forward routing. 
\begin{figure}[ht]
    \centering
	\includegraphics[width=0.9\linewidth]{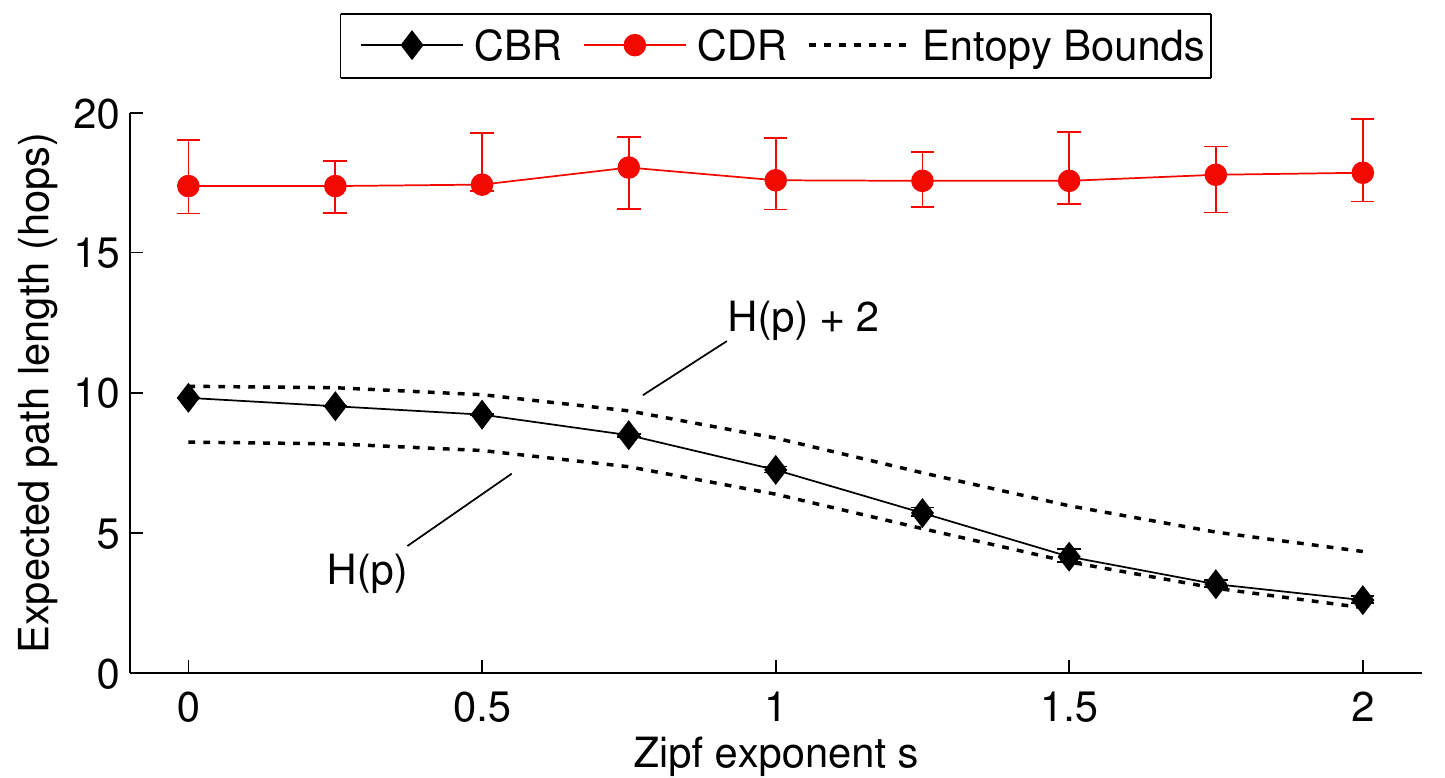}
    \caption{EPL as a function of Zipf's law exponent $s$ comparing to the Entropy.} 
    \label{fig:exp}
\end{figure}

\vspace{3mm}\emph{Insight 1. Entropy determines the efficiency of routing.} To test the role of entropy, we measure the expected path length (EPL) defined in~Eq.~\eqref{eq:EPL} for networks of 
size 300 over a wide range of Zipf's law exponents. For every exponent $s$ we generate $\vec{p_s}$ and $\vec{p_d}$, repeat 50 simulations, and take the average.  
As shown in the Figure~\ref{fig:exp}, by optimizing the topology design using the request distribution one obtains a dramatic improvement of routing path length (in expectation). The role of entropy as an upper bound is also clear from the figure. Whereas the performance of CDR remains constant across all exponents $s$ (we remind the reader that $s$ controls the entropy---see Eq.~\ref{eq:Zipf_entropy}), using CBR with highly skewed distributions (large $s$) results in shorter paths. Moreover, the EPL is lower and upper bounded by $H(\vec{p})$ and $H(\vec{p})+2$, respectively. 

\vspace{3mm}\emph{Insight 2. Frequently occurring paths are shorter.} The relation between path probability and path length is more clearly depicted in Figure~\ref{fig:pathlength}. In the figure, we choose exponent $s = 1$ for which $H(\vec{p})=6.37$ and plot for each node pair the path probability (horizontal axis) and the corresponding routing path length averaged across all iterations (vertical axis).
As per our analysis, in CBR, the path length between two nodes $u_i$ and $u_j$ depends on the minimum of their assigned codewords $\cw{i}$ and $\cw{j}$ (Lemma \ref{lem:rplf}), which in turn is inversely proportional to $\min\{ p_i, p_j \}$. Simply put, by design, the more active a node pair is, the shorter the path between them. 
In particular, while the longest path is as high as 12 hops, the overwhelming majority of frequently occurring paths are very short. Specifically, we can see that all paths with probability larger than $10^{-4}$ are of length below 8. On the other hand, the paths of CDR are approximately 18 hops long, and are independent of the path probability. 

\begin{figure}[t]
	\centering
	\includegraphics[width=0.9\linewidth]{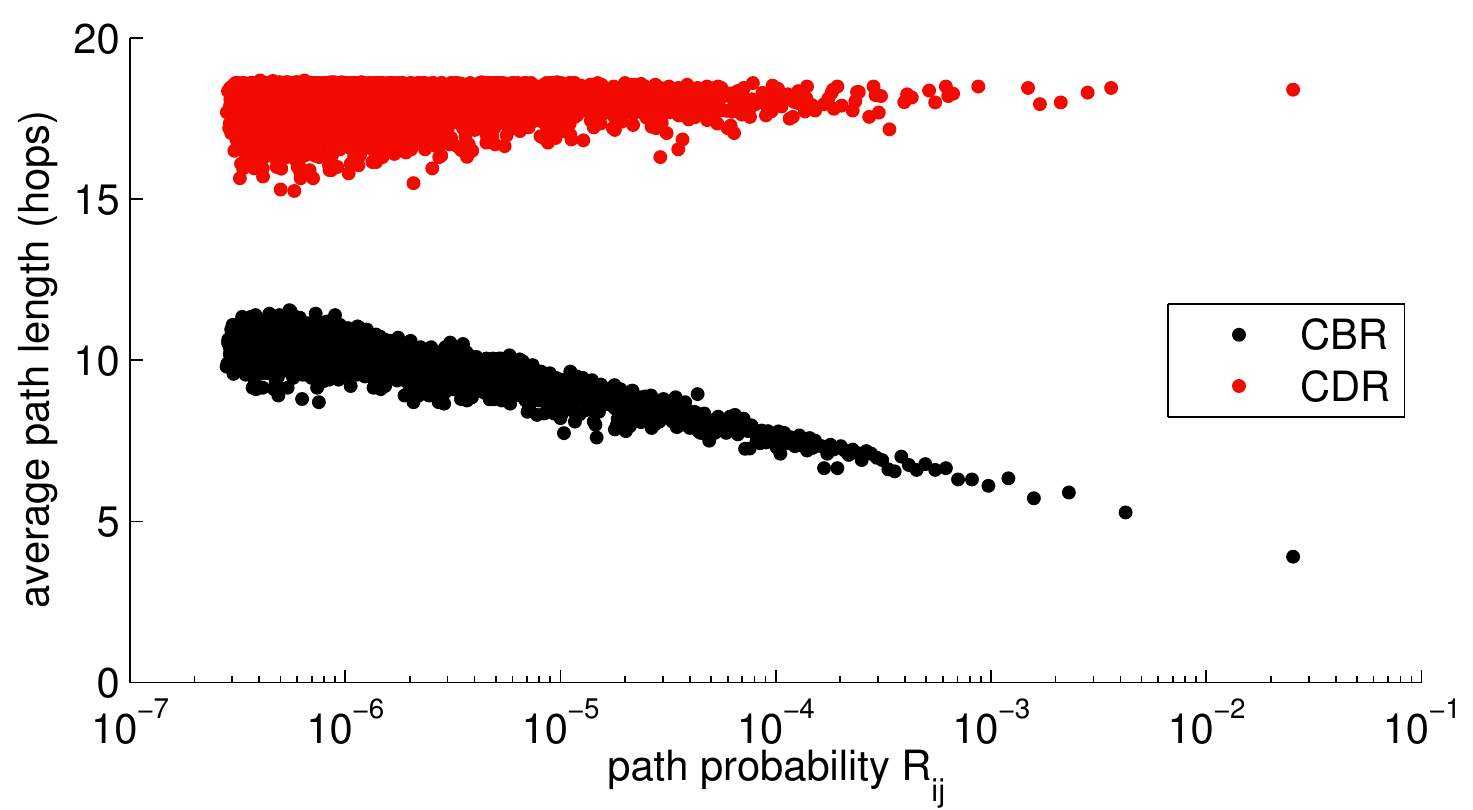}
	\caption{Average path length as a function of path probability $R_{ij}$. $CBR$ demonstrate that high probability routes have shorter length.}
    \label{fig:pathlength}
\end{figure}

\vspace{3mm}\noindent\emph{Insight 3. Scalability.} Importantly, how are our results affected by the network size? To find out, we set $s=1$ and repeated our experiments over networks with size varying from 100 to 1,000 nodes.
Figure~\ref{fig:netsize} shows the EPL for each network size, averaged over 10 iterations. It is clear from the figure, that for CBR the path length grows with $H(\vec{p})$, which can be much lower than the $O(\log{n})$, achieved by CDR.

\begin{figure}
\centering
        \includegraphics[width=0.9\linewidth]{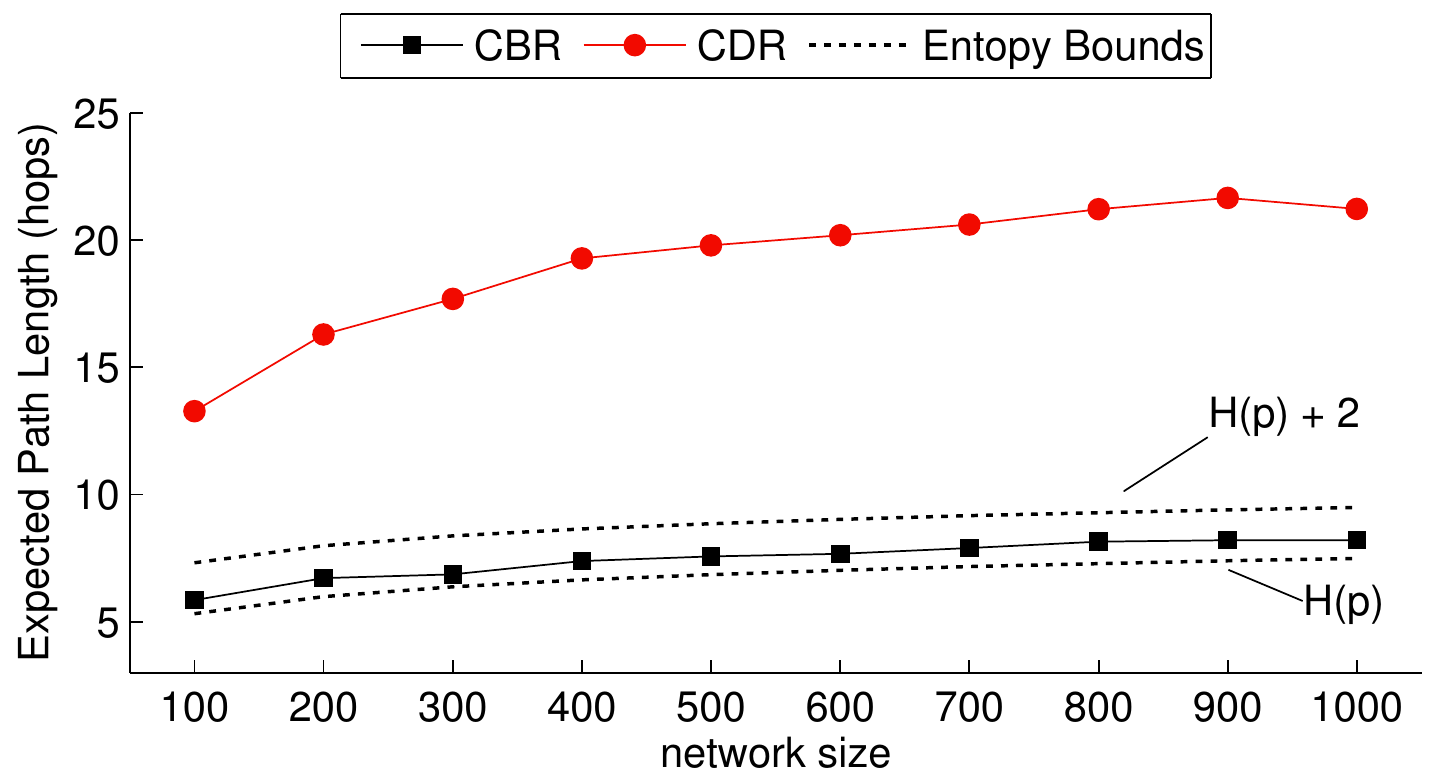}
    \caption{Expected path length as function of network size. $CBR$ grows with entropy.}
    \label{fig:netsize}    
\end{figure}

\paragraph{Load Balancing.}
Next we study load properties. 
We set to answer two questions: 
First, is there is a correlation between node activity and node
load? Recall that nodes also serve as 
\emph{relays} helping to forward messages from other nodes.
Second, is the load balanced also on the edges of the network? 

In Section~\ref{sec:properties}, we showed that every node $u_k$ has in expectation a degree proportional to its activity $a_k$. Our results however do not guarantee the same about the load of each node, meaning the aggregate traffic $u_k$ relays.
As it turns out, the answer is negative. Our simulations presented in 
Figure~\ref{fig:nodeload} show that low activity (and degree) 
nodes also do not serve much traffic.  
As before, the simulation were conducted 
on networks of 300 \svrs, where the results are the average 
of many simulations.
For each node $u_k$, we summed the probabilities  $R_{ij}$ of all routing paths which go via relay $u_k$, with $i,j \neq k$.  
Similar to the \svrs degree case, the node load in 
the Code-Based case is proportional to the demand. 
However, we can see that there is no dependency between \svr relay load 
and the activity probability in the Non Code-Based case.

\begin{figure}[t]
	\centering
	\includegraphics[width=0.9\linewidth]{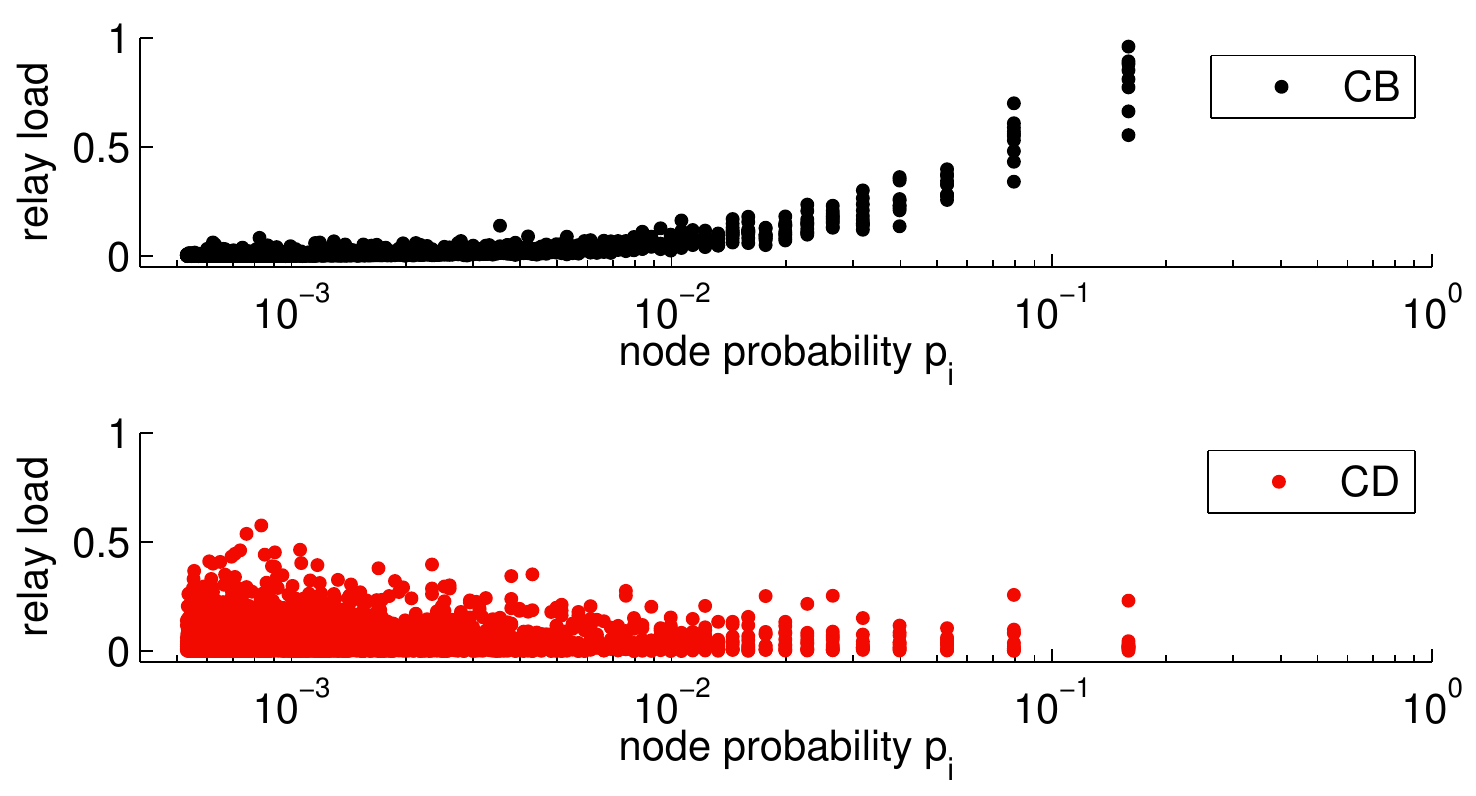}
	\caption{Node (weighted) relay load as function of node activity. In $CBR$ highly active node have high load a relays and low active nodes are not active as relays.
	In $CDR$ low active nodes my be have high load as relays.}
    \label{fig:nodeload}
\end{figure}

\paragraph{Robustness to Failures.}
Let us next study robustness properties.
We consider a scenario where 
every edge fails independently with probably $f$.
Then we run our routing algorithms with given message 
$TTL$s and a fail recovery mechanism. 
That is, in each step, 
the $TTL$ is decreased and the message 
is dropped if it did not reach 
the destination before $TTL=0$. 
To recover an edge failure (i.e., the next hop is not accessible), 
the message is sent to a random neighbor and re-routed from there
 to the destination. This mechanism is run in both the code and non-code based networks.

We first evaluate the \emph{route success fraction}, i.e., 
that fraction of routes that were to reach the destination for varying edge failure probabilities. 
Note that this is a \emph{weighted fraction}, and we sum the probabilities $p(i,j)$ of 
all the successful routes. In this simulation, 
we set the $TTL$ parameter to be $3$ times 
the route length without failures. 
The \emph{top subfigure} of Figure~\ref{fig:fail} shows our results for edges failure probabilities up to 50\%. Clearly, our approach improves the resistance to edge 
failures, and allows us to deliver more requests to the destination.
With $10\%$ failures, we can deliver more than $80\%$ of the 
messages and with $15\%$ we can deliver more than $70\%$ of the messages. 

The \emph{bottom subfigure} examine the effect of a maximum allowed routing length (TTL). 
We set the edge failure probability to be equal to $10\%$. 
The measure is the fraction of routing successes, which is equal to 
the \emph{accumulated} success of path weights $p(i,j)$ 
the figure shows that our approach is able to 
re-route traffic much better than the original approach,
as TTL increases. In the case where the TTL is 
greater than $3$ times the original path length, 
we succeed to finish routes in more than $80\%$, where in the
non code-based design the success rate in less than $50\%$.


\begin{figure}[t]
    \centering
	\includegraphics[width=0.9\linewidth]{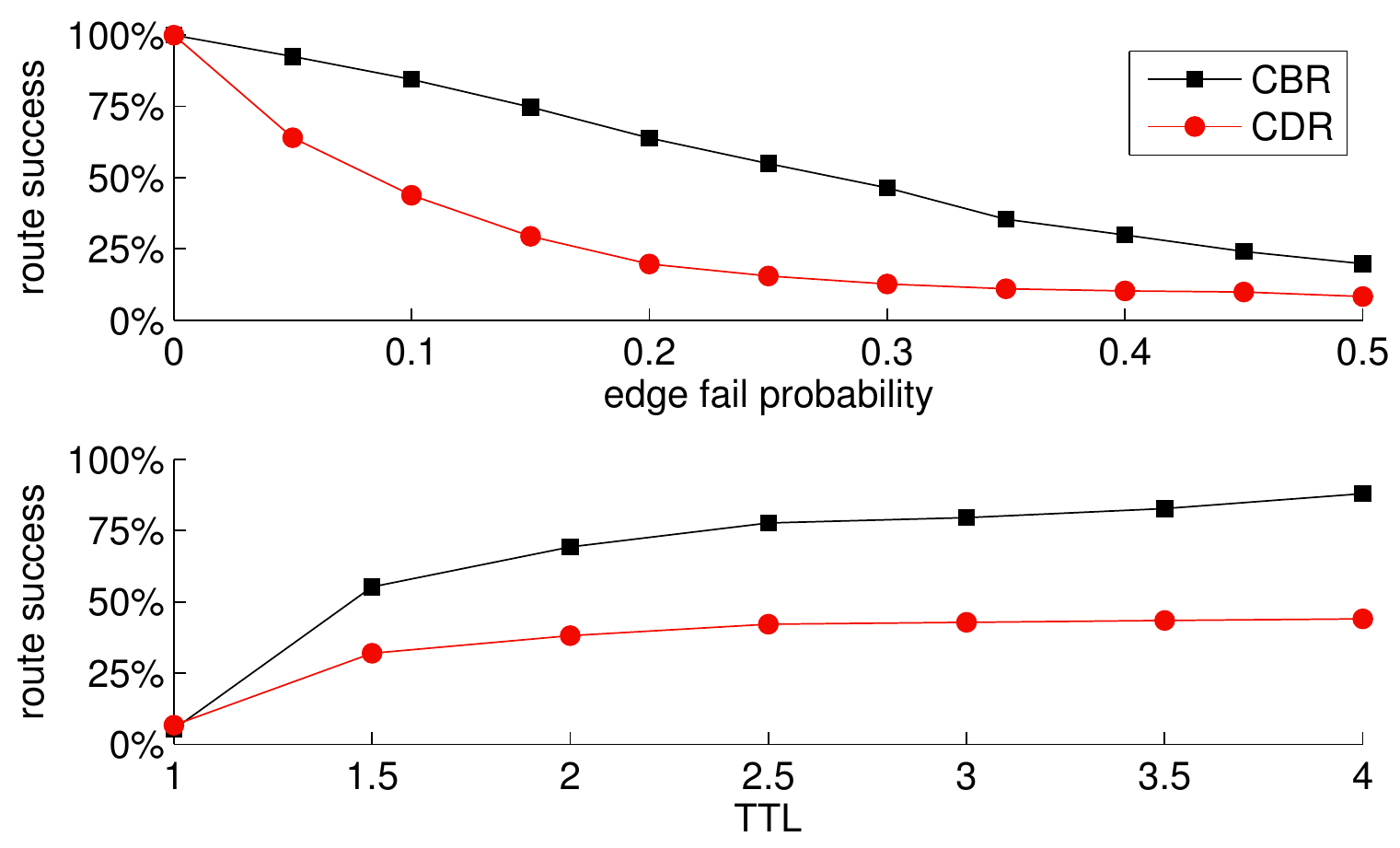}
	\caption{Even in the presence of severe edge failures, CBR consistently delivers more packets than CDR when TTL=3 time route length (\emph{top subfigure}). 
	The (\emph{bottom subfigure}) shows the results for increasing TTL and failure probability of $10\%$.}
	\label{fig:fail}
\end{figure}

\section{Conclusion}\label{sec:conclusion}

Inspired by the advent of more traffic optimized
datacenter interconnects and overlay topologies,
this paper introduced a formal metric and approach
to design robust and sparse network topologies 
providing information-theoretic path length guarantees,
based on coding. 
We see our contributions on the conceptual side and
understand our work as a first step. In particular, more work
is required to tailor our framework to a specific use case,
such as a datacenter interconnect or peer-to-peer overlay, where 
specific additional constraints may arise, such as channel
attenuation, jitter, etc., or where greedy routing may not be needed
but routes can be established along shortest paths.


\balance

{

}
\end{document}